\newtheorem{theorem}{Theorem}
\newtheorem{rk}[theorem]{Remark}
\newtheorem{proposition}[theorem]{Proposition}
\newtheorem{ex}[theorem]{Example}
\newtheorem{definition}[theorem]{Definition}
\newtheorem{lemma}[theorem]{Lemma}
\newtheorem{corollary}[theorem]{Corollary}
\newcommand{\TsP}{\ensuremath{\mathbf{T}(\sigma)}\xspace}
\newcommand{\algo}{Algorithm BBCP07}
\begin{document}


%

  \title{Average-case analysis of perfect sorting by reversals}

  \author{Mathilde Bouvel\footnote{CNRS, Universit\'e de Bordeaux I, LaBRI, Bordeaux, France}, Cedric Chauve\footnote{Department of Mathematics, Simon Fraser University, Burnaby (BC), Canada, V5A 1S6}, \addtocounter{footnote}{-1} Marni Mishna\footnotemark, Dominique Rossin\footnote{CNRS, \'Ecole Polytechnique, LIX, Palaiseau, France}}
\date{~}

\maketitle


\begin{abstract}
  Perfect sorting by reversals, a problem originating in computational
  genomics, is the process of sorting a signed permutation to either the
  identity or to the reversed identity permutation, by a sequence of
  reversals that do not break any common interval. B\'erard {\em et
    al.} (2007) make use of strong interval trees to describe an
  algorithm for sorting signed permutations by reversals.
  Combinatorial properties of this family of trees are essential to
  the algorithm analysis. Here, we use the expected value of
  certain tree parameters to prove that the average run-time of the
  algorithm is at worst, polynomial, and additionally, for
  sufficiently long permutations, the sorting algorithm runs in
  polynomial time with probability one.  Furthermore, our analysis of
  the subclass of commuting scenarios yields precise results on the
  average length of a reversal, and the average number of reversals.

\noindent\emph{A preliminary version of this work appeared in the
  proceedings of Combinatorial Pattern Matching (CPM) 2009, Lectures
  Notes in Computer Science, vol. 5577, pp. 314--325, Springer.}
\end{abstract}


\section{Introduction}
\label{sec:intro}
There are many examples where the average case complexity of a sorting
algorithm is neatly computed with a generating function computation
on a related family of trees. Most of the heavy lifting is done by
complex analysis.  We give a new example here: we perform an average
case analysis of a sorting algorithm from computational genomics by
generating function analysis of a family of trees.

\paragraph{Motivation: a computational genomics problem.}
With the availability of a growing number of sequenced and assembled
genomes, the comparison of whole genomes in terms of large-scale
evolutionary events called \emph{genome rearrangements} is a
fundamental task in computational genomics. Computing a genomic
distance and/or a parsimonious evolutionary scenario between a pair of
genomes is one of the basics problems in this field, with applications
such as reconstructing phylogenies~\cite{moret-reconstructing} or
unraveling evolutionary properties of groups of
genomes~\cite{lefebvre,peng-fragile}. This general problem was
formally introduced as an algorithmic problem by Sankoff
in~\cite{sankoff-edit}. Since then, these questions have been
extensively investigated, for different models of genomes and genome
rearrangements, leading to a rich corpus of combinatorial and
algorithmic results; we refer the reader to the recent book by Fertin
\emph{et al.} on this topic~\cite{fertin09}.

\paragraph{Signed permutations, reversals and scenarios.}
In this work, we study the problem of computing parsimonious
perfect reversal scenarios between unichromosomal
genomes. Unichromosomal genomes can be modeled by signed permutations:
each element of a permutation corresponds to a genomic marker (a gene
for example but not exclusively), defined as a segment of the
double-stranded DNA molecule forming a chromosome segment, with its
sign indicating which strand of the chromosome carries the marker.

A reversal is an evolutionary event that reverses a chromosomal
segment. It can be modeled as a discrete operator acting on a signed
permutation, reversing the order and sign of an interval of the
permutation. A sequence of reversals that transforms one signed
permutation into another one is viewed as a possible evolutionary
scenario from a genome to another one. Such a scenario is said to be
\emph{parsimonious} if no other scenario exists that requires less
reversals.

Notice that up to relabeling, we can always assume that one of the two
permutations is the identity. Without loss of generality, we
assume that the permutation we want to obtain at the end of a scenario
is the identity, hence the connection with the sorting
problem. Sankoff initiated in~\cite{sankoff-edit} the algorithmic
study of parsimonious reversal scenarios. Since then this problem has
been considered by many authors, and efficient algorithms exist to
compute a parsimonious
scenario~\cite{hannenhalli-transforming,bergeron-inversion,tannier-advances}.

\paragraph{Common intervals and perfect scenarios.}
However, there can be many scenarios that satisfy this parsimony
constraint. In fact, on real data sets, there can be an exponential
number of parsimonious reversal scenarios (see~\cite{braga-exploring}
for example). This illustrates the need to refine the criteria which
defines a good evolutionary scenario, and to go beyond the simple
parsimony criterion. This need motivates the introduction of the
\emph{perfect scenarios}.

Perfect scenarios aim at avoiding convergent evolution. That is, if
groups of genes or other genomic markers are co-localized in genomes
of two species, a preferred scenario would preserve this quality back
to the ancestral genome; the group of genes should remain together in
every step of the evolution. In the combinatorial model based on
signed permutations, this appears as a \emph{common interval\/}: a
collection of sequential numbers that forms an interval both in the
identity permutation and (in absolute values) in the signed
permutation to be sorted. A sorting scenario for a signed permutation
$\sigma$ is said to \emph{break} a common interval $I$ of $\sigma$
if it contains a reversal such that the elements of $I$ do not form
an interval in the permutation obtained after the reversal is
performed.  A scenario that does not break any common interval is said
to be \emph{perfect}, and may very well be longer than the shortest,
purely parsimonious, scenario. However it is considered to have
stronger properties as a hypothetical evolutionary scenario. The
algorithmic problem is thus stated:

\begin{quote}
  Given a signed permutation, compute a sequence of reversals that
  sorts it towards the identity or reversed identity, does not break
  any of its common intervals, and is shortest among
  all such scenarios.
\end{quote}

Notice that the permutation obtained at the end of the scenario can be
the identity, or the reversed identity, which represents the same
genome but viewed from the other end. 

\paragraph{Computing perfect scenarios: existing results.}
The refined problem that asks to preserve only a predefined subset of
the existing common intervals is
NP-complete~\cite{figeac-sorting}. Even in the general problem, which
considers all common intervals, no algorithms with polynomial
worst-case time complexity are known. However, some fixed parameter
tractable (FPT) algorithms have been
described~\cite{berard-perfect,berard-more}. 

There also exists some classes of signed permutations that define
tractable
instances~\cite{berard-conservation,berard-perfect,diekmann-evolution}.
Among such tractable classes of signed permutations, {\em commuting
  permutations} is the sub-class of signed permutations that can be
sorted by a \emph{commuting scenario}, \emph{i.e.} by a perfect
scenario with the striking trait that the property of being a perfect
scenario is preserved even when the sequence of reversals is reordered
in every possible way. Surprisingly, examples of commuting scenarios
arise in the study of mammalian genome
evolution~\cite{berard-conservation}.

\paragraph{A link with trees and its applications: new results.}
The central combinatorial object in the theory of perfect sorting by
reversals is the ``strong interval tree'' which tracks all common
intervals of a (signed) permutation.  It serves as a guide for the
computation of perfect scenarios and the parameters introduced in the
FPT algorithms described in~\cite{berard-perfect,berard-more} read
naturally in terms of this tree. This link opens the way to a refined
analysis of some of the existing algorithms for perfect sorting by
reversals, which is the purpose of our work.

The two key new results in Section~\ref{sec:prime} are
Theorem~\ref{thm:formePermutation}, which states that for large enough
$n$, with probability $1$, computing a perfect scenario for signed
permutations can be done in time polynomial in $n$ and
Theorem~\ref{thm:avg}, which states that computing a perfect scenario
can be done in polynomial time on average.
Section~\ref{sec:commuting} offers two new results on the average shape of
a commuting scenario: we show that in parsimonious perfect
scenarios for commuting permutations of size $n$, the average number
of reversals is asymptotically $1.2n$, and the average length of a
reversal is asymptotically $1.05 \sqrt{n}$.

We conclude by discussing the relevance of these results, both from
theoretical and applied point of views, and outlining future research.

\section{Preliminaries}
\label{sec:prelim}


We first summarize the combinatorial and algorithmic frameworks for
perfect sorting by reversals. For a more detailed treatment,
in particular for properties of the strong interval tree, we
refer the reader to \cite{berard-perfect}.

\paragraph{Permutations, reversals, common intervals and perfect
  scenarios.}  

A {\em signed permutation\/} of size~$n$ is a permutation of the set of
integers $\{1, 2, \ldots, n\}$ in which each element additionally
has a sign, either positive or negative. For clarity, negative
integers are represented by placing a bar over them and positive signs
are omitted.  We write our permutations in one line notation. For example,
$\sigma=[1\;\overline{3}\;\overline{2}\;5\;4\;6]$ is a signed
permutation of size 6. We denote by $Id_n$ (resp. $\overline{Id_n}$,
$Id_n^m$) the identity (resp. reversed identity, mirrored identity)
permutation, $[1\ 2\dots n]$ (resp. $[\overline{n}\  \dots\ 
\overline{2}\ \overline{1}]$, $[n\ \dots\ 2\ 1]$). When the number
$n$ of elements is clear from the context, we will simply write $Id$, $\overline{Id}$, or $Id^m$.

An {\em interval} $I$ of a signed permutation $\sigma$ of size $n$ is a
segment of adjacent elements of $\sigma$. The {\em content} of $I$ is
the subset of $\{1,\dots,n\}$ defined by the absolute values of the elements of
$I$. Given $\sigma$, an interval is defined by its content and from now,
when the context is unambiguous, we identify an interval with its
content.

The {\em reversal} of an interval of a signed permutation reverses the
order of the elements of the interval, while changing their signs.
The length of a reversal is the number of elements in the interval
that is reversed.  If $\sigma$ is a permutation, we denote by
$\overline{\sigma}$ the permutation obtained by reversing the complete
permutation $\sigma$. A {\em scenario} for $\sigma$ is a sequence of
reversals that transforms $\sigma$ into $Id_n$ or $\overline{Id_n}$.
The {\em length} of such a scenario is the number of reversals it
contains. A scenario of minimal length is a \emph{parsimonious
  scenario}.

\begin{ex} \label{ex:interval} \em Let
  $\sigma=[1\;\overline{4}\;\overline{5}\;2\;\overline{3}\;6]$ be a
  signed permutation of size 6, then
  $\overline{\sigma}=[\overline{6}\;3\;\overline{2}\;
  5\;4\;\overline{1}]$.  Reversing, in $\sigma$, the interval
  $[\overline{5}\;2\;\overline{3}]$, or equivalently the set $\{2,
  3,5\}$, yields the signed permutation
  $[1\;\overline{4}\;3\;\overline{2}\;5\;6]$. Reversing successively
  $\{2,3,4\}$ and $\{3\}$ completes this first reversal to form a
  parsimonious scenario of length $3$.
\end{ex}

A {\em common interval} of a permutation $\sigma$ of size~$n$ is a subset
of $\{1,2, \ldots, n\}$ that is an interval in both $\sigma$ and the identity
permutation $Id_n$. The singletons and the set $\{1, 2, \ldots, n\}$
are always common intervals called {\em trivial common intervals}.

\begin{ex} \label{ex:common-interval}\em
  The common intervals of $\sigma =
  [1\;\overline{3}\;\overline{2}\;5\;4\;6]$ are $\{2,3\}$,
  $\{1,2,3\}$, $\{4,5\}$, $\{4,5,6\}$, $\{2,3,4,5\}$, $\{2,3,4,5,6\}$,
  $\{1,2,3,4,5\}$, $\{1,2,3,4,5,6\}$, and the singletons $\{1\},\{2\},\{3\},\{4\},\{5\},\{6\}$.  
\end{ex}

Two distinct sets (intervals here) $I$ and $J$ {\em commute} if their
contents trivially intersect, that is either $I \subset J$, or $J
\subset I$, or $I \cap J = \emptyset$.  If intervals $I$ and $J$ do
not commute, they {\em overlap}. A scenario $S$ for $\sigma$ is a {\em
  perfect scenario} if no reversal of $S$ breaks any common interval
of $\sigma$, or equivalently \cite{berard-perfect} if every reversal
of $S$ commutes with every common interval of $\sigma$. It is easy to
see that there always exists a perfect scenario for a given signed
permutation. A perfect scenario of minimal length, among all perfect
scenarios, is a {\em parsimonious perfect scenario}.

A permutation $\sigma$ is said to be {\em commuting} if there exists a
scenario for $\sigma$ such that for every pair of reversals the
corresponding intervals commute. Such a scenario is called a {\em
  commuting scenario} and is obviously perfect. It was shown
in~\cite{berard-perfect} that, if a signed permutation can be sorted
by a commuting scenario, then any other perfect scenario for this
signed permutation has the same set of reversals, and conversely every
reordering of the reversals also gives a perfect scenario. This
implies that a commuting scenario is also a parsimonious perfect
scenario.

\begin{ex} \label{ex:commuting} \em Let
  $\sigma=[1\;\overline{3}\;\overline{2}\;5\;4\;6]$ be a signed
  permutation of size 6. The scenario $\{2, 3\}$, $\{4,5\}$, $\{4\}$,
  $\{5\}$ is a commuting scenario, and $\sigma$ is a commuting
  permutation.
\end{ex}

\begin{rk}\label{rem:commuting}\em
  Commuting permutations have been investigated, in connection with
  permutation patterns, under the name of {\em separable} permutations
  \cite{Iba97}.
\end{rk}

\paragraph{The strong interval tree.}

First, we remark that the following definitions are valid for both
signed and unsigned permutations. A common interval~$I$ of a
permutation $\sigma$ is a {\em strong interval} of $\sigma$ if it
commutes with every other common interval of $\sigma$.  The inclusion
order on the set of strong intervals of a permutation of size~$n$
defines an $n$-leaf tree, denoted by \TsP, whose leaves are the
singletons and whose root is the interval containing all elements of
the permutation. We require that the elements of $\{1,2,\ldots, n\}$ appear on
the leaves of \TsP from left to right in the same order they do in
$\sigma$. This implies that the children of every internal vertex of
\TsP are totally ordered, or in other words that \TsP is a plane tree
\emph{i.e.} a tree embedded in the plane. We identify a vertex of \TsP
with the strong interval it represents. If $\sigma$ is a signed
permutation, the sign of every element of $\sigma$ is given to the
corresponding leaves in \TsP. 
Figure~\ref{fig:prime} shows an example of a strong interval
tree.

 \begin{figure*}
    \begin{center}
    
    \begin{tikzpicture}
\draw (10,0) node[draw, rectangle] (l1) {$1,2,3,4,5,6,7,8,9,10,11,12,13,14,15,16,17,18$};
\draw (8,-1.5) node[ellipse,draw] (l21) {$2,3,4,5,6,7,8,9$};
\draw (12,-1.5) node[ellipse,draw] (l22) {$10,11,12,13,14$};
\draw (l1) -- (l21);
\draw (l1) -- (l22);
\draw (7,-3) node[ellipse,draw] (l31) {$2,3,4,5$};
\draw (10,-3) node[rectangle,draw] (l32) {$6,7$};
\draw (12.5,-3) node[rectangle,draw] (l33) {$13,14$};
\draw (15.5,-3) node[rectangle,draw] (l34) {$16,17$};
\begin{scope}[xshift=3cm]
\draw (1,-4.5) node[rectangle,draw] (l41) {$1$};
\draw (1,-5) node {$+$};
\draw (1.75,-4.5) node[rectangle,draw] (l42) {$8$};
\draw (1.75,-5) node {$-$};
\draw (2.5,-4.5) node[rectangle,draw] (l43) {$4$};
\draw (2.5,-5) node {$+$};
\draw (3.25,-4.5) node[rectangle,draw] (l44) {$2$};
\draw (3.25,-5) node {$+$};
\draw (4,-4.5) node[rectangle,draw] (l45) {$5$};
\draw (4,-5) node {$-$};
\draw (4.75,-4.5) node[rectangle,draw] (l46) {$3$};
\draw (4.75,-5) node {$+$};
\draw (5.5,-4.5) node[rectangle,draw] (l47) {$9$};
\draw (5.5,-5) node {$+$};
\draw (6.25,-4.5) node[rectangle,draw] (l48) {$6$};
\draw (6.25,-5) node {$-$};
\draw (7,-4.5) node[rectangle,draw] (l49) {$7$};
\draw (7,-5) node {$+$};
\draw (7.75,-4.5) node[rectangle,draw] (l410) {$12$};
\draw (7.75,-5) node {$+$};
\draw (8.5,-4.5) node[rectangle,draw] (l411) {$10$};
\draw (8.5,-5) node {$+$};
\draw (9.25,-4.5) node[rectangle,draw] (l412) {$14$};
\draw (9.25,-5) node {$-$};
\draw (10,-4.5) node[rectangle,draw] (l413) {$13$};
\draw (10,-5) node {$+$};
\draw (10.75,-4.5) node[rectangle,draw] (l414) {$11$};
\draw (10.75,-5) node {$-$};
\draw (11.5,-4.5) node[rectangle,draw] (l415) {$15$};
\draw (11.5,-5) node {$+$};
\draw (12.25,-4.5) node[rectangle,draw] (l416) {$17$};
\draw (12.25,-5) node {$-$};
\draw (13,-4.5) node[rectangle,draw] (l417) {$16$};
\draw (13,-5) node {$+$};
\draw (13.75,-4.5) node[rectangle,draw] (l418) {$18$};
\draw (13.75,-5) node {$+$};
\end{scope}
\draw (l21) -- (l31);
\draw (l22) -- (l33);
\draw (l21) .. controls +(-2,-1)  and +(0,1) ..  (l42);
\draw (l21) .. controls +(1,-1)  and +(0,1) ..  (l47);
\draw (l21) .. controls +(1.5,-1)  and +(0,1) ..  (l32);
\draw (l1) .. controls +(-6,-1)  and +(0,1) ..  (l41);
\draw (l1) .. controls +(6,-1)  and +(0,1) ..  (l418);
\draw (l1) .. controls +(4.5,-1)  and +(0,1) ..  (l34);
\draw (l1) .. controls +(6.5,-1.5)  and +(0,1) ..  (l415);
\draw (l34) -- (l416);
\draw (l34) -- (l417);
\draw (l33) -- (l412);
\draw (l33) -- (l413);
\draw (l22) .. controls +(1.5,-1)  and +(0,1) ..  (l414);
\draw (l31) -- (l43);
\draw (l31) -- (l44);
\draw (l31) -- (l45);
\draw (l31) -- (l46);
\draw (l32) -- (l48);
\draw (l32) -- (l49);
\draw (l22) -- (l410);
\draw (l22) -- (l411);

    \end{tikzpicture}
   \end{center}
   
   \caption{%
{\small\em 
The strong interval tree $\mathbf{T}([1~\overline{8}~4~2~\overline{5}~3~9~\overline{6}~7~12~10~\overline{14}~13~\overline{11}~15~\overline{17}~16~18])$.
    Vertices are labeled by the strong intervals.  There are three
    non-trivial linear vertices (rectangular) and three
    prime vertices (round). The root and the vertex
    $\{6,7\}$ are increasing linear vertices, while the linear
    vertices $\{16,17\}$ and $\{13,14\}$ are decreasing.  }}
   \label{fig:prime}
\end{figure*}
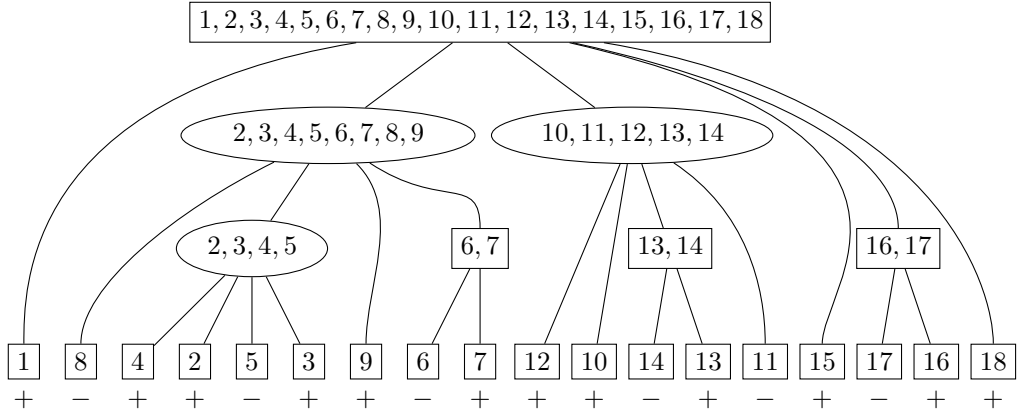

Let $I$ be a strong interval of $\sigma$ that is not a singleton and let
${\cal I}=(I_1,\dots, I_k)$ the unique partition of the elements of
$I$ into maximal strong intervals, from left to right. The
\emph{quotient permutation} of $I$, denoted $\sigma_I$, is the
permutation of size $k$ defined as follows: $\sigma_I(i)$ is smaller than
$\sigma_I(j)$ in $\sigma_I$ if and only if any element of the content of $I_i$ is
smaller than any element of the content of $I_j$. A fundamental property of the strong interval
tree is that the quotient permutation $\sigma_I$ of an internal vertex
$I$ having $k$ children ($k \geq 2$) in a strong interval tree can
only be either $Id_k$, ${Id_k^m}$, or a permutation of size $k$, with $k
\geq 4$, whose only common intervals are the $k+1$ trivial common
intervals. Such a permutation with no non-trivial common interval is
called a {\em simple permutation}. The shortest simple permutations
are of size $4$ and are $[3\ 1\ 4\ 2]$ and $[2\ 4\ 1\ 3]$.  We
describe simple permutations in more detail in
Section~\ref{ssec:combPre}.

For an internal vertex $I$, if $\sigma_I=Id_k$ (resp. ${Id_k^m}$, is a
simple permutation), then $I$ is said to be an {\em increasing linear}
vertex (resp. {\em decreasing linear} vertex, {\em prime}
vertex). Another crucial property of a strong interval tree is that no
two increasing (resp.  decreasing) linear vertices can be adjacent: if
a linear vertex is the child of another linear vertex, then one of
them is increasing and the other one is decreasing.

The strong interval tree is also known as the {\em substitution
  decomposition tree} \cite{AA05}, and is very similar to
\emph{$PQ$-trees} \cite{booth-testing}, a data structure used to
represent the common intervals of two or more
permutations~\cite{heber-finding, bergeron-computing, buixuan-revisiting}. More
precisely, the strong interval tree defines a PQ-tree if linear
(resp. prime) vertices are called Q-vertices (resp. P-vertices). This
PQ-tree can be computed in linear time \cite{bergeron-computing}. To
obtain the strong interval tree, the quotient permutation of each
vertex needs then to be computed. The algorithm
of~\cite{bergeron-computing} can be adapted to compute them, still in
linear time. Indeed, given the tree, the quotient permutations can be
computed as follows: consider the elements on the leaves, from $1$ to
$n$, and propagate these elements along the edges of the tree towards
the root, until a previously used edge is encountered. The relative
ordering of the elements at every internal vertex of the tree gives
the quotient permutation, and their computation is obtained in
$\mathcal{O}(n)$ time.



\smallskip
\paragraph{The strong interval tree as a guide for perfect sorting by
  reversals.}

The algorithm in~\cite{berard-perfect} computing a parsimonious
perfect scenario for a given signed permutation is the central object
of study here, and is henceforth labeled \algo.

To compute a parsimonious perfect scenario for a signed permutation
$\sigma$, \algo\, heavily relies on the strong interval tree \TsP  of
$\sigma$. It starts with computing this tree, and then assign signs to
internal vertices according to the following rules: an increasing
(resp. decreasing) linear vertex is signed $+$ (resp. $-$) and a prime
vertex having a linear parent inherits its sign from its parent. Some
prime vertices may remain unsigned at this step, and the algorithm
will explore all the possible assignments of signs to these prime
nodes. If $p$ denote the number of prime vertices in \TsP, there may
be up to $2^p$ possible assignments. The key ingredient of the
algorithm is that any reversal in a perfect scenario is either a
strong interval (hence a vertex of \TsP) or the union of consecutive
children of a prime vertex of \TsP~\cite[Proposition
  2]{berard-perfect}. Hence a scenario can be computed by looking
successively at each vertex of the strong interval tree, sorting a
signed permutation, defined from its quotient permutation and the
signs of its children, towards either the identity (if the vertex has
sign $+$) or the reversed identity (if the vertex has sign $-$). More
precisely, for each assignment of signs, a scenario is computed as
follows:
\begin{itemize}
\item Transform the quotient permutation of each vertex into a signed
  permutation by lifting the sign of each child onto the
  corresponding element in the quotient permutation.
 \item For each prime node signed $+$ (resp. $-$) whose signed
   quotient permutation is $\tau$ (a signed permutation of size $k$),
   compute a parsimonious scenario from $\tau$ to $Id_k$ (resp. to
   $\overline{Id_k}$). This is achieved using a polynomial-time
   algorithm solving the general sorting by reversal problem (without
   the 'perfectness' condition). The most efficient algorithm so far
   is the one of \cite{tannier-advances}, that runs in $\mathcal{O}(k
   \sqrt{k \log k})$ time.
 \item In addition to the reversals obtained at the previous step,
   perform a reversal for every interval of $\sigma$ that correspond
   to a vertex (internal vertex or leaf) in \TsP whose parent is
   linear and whose sign is different from the sign of its parent.
\end{itemize}

The scenarios thus obtained are all perfect scenarios, and among them,
those of minimal length are parsimonious perfect scenarios. For the
correctness and complexity analysis of \algo, we refer to
\cite{berard-perfect}.

\begin{ex}
On the example of Figure \ref{fig:prime}, the root of \TsP, its two
prime children and vertex $\{6,7\}$ are signed $+$, whereas vertices
$\{13,14\}$ and $\{16,17\}$ are signed $-$. For vertex
$\{2,3,4,5\}$, the two possible sign assignments have to be
tested. 
Choosing sign $+$ (resp. $-$) produces a scenario with $15$
(resp. $14$) reversals, among which $4$ correct a sign mismatch between a vertex and its linear parent (for vertices $\{6\}, \{13\}, \{16\}$ and $\{16,17\}$) and the remaining $11$ (resp. $10$) arise from reversals in prime nodes. More precisely, sorting the right-most prime child of the root requires $3$ reversals (through the optimal scenario $[\mathbf{3\,1\,\overline{4}}\,\overline{2}]\rightarrow [\mathbf{4\,\overline{1}}\,\overline{3}\,\overline{2}] \rightarrow [1\,\mathbf{\overline{4}\,\overline{3}\,\overline{2}}] \rightarrow [1\,2\,3\,4] $); when sign $+$ is chosen, the left-most prime child of the root is sorted in $4$ reversals ($[\mathbf{\overline{3}\,1}\,4\,2]\rightarrow [\mathbf{\overline{1}\,3\,4}\,2] \rightarrow [\mathbf{\overline{4}\,\overline{3}\,1\,2}] \rightarrow [\mathbf{\overline{2}\,\overline{1}}\,3\,4] \rightarrow [1\,2\,3\,4] $) and its prime child in $4$ reversals ($[3\,1\,\mathbf{\overline{4}\,2}]\rightarrow [\mathbf{3\,1}\,\overline{2}\,4] \rightarrow [\overline{1}\,\mathbf{\overline{3}\,\overline{2}}\,4] \rightarrow [\mathbf{\overline{1}}\,2\,3\,4] \rightarrow  [1\,2\,3\,4]$); and when sign $-$ is chosen, the left-most prime child of the root is sorted in $3$ reversals ($ [\overline{3}\,\mathbf{\overline{1}\,4}\,2] \rightarrow [\overline{3}\,\mathbf{\overline{4}\,1\,2}] \rightarrow [\mathbf{\overline{3}\,\overline{2}\,\overline{1}}\,4] \rightarrow  [1\,2\,3\,4]$) and its prime child in $4$ reversals ($[3\,\mathbf{1\,\overline{4}}\,2]\rightarrow [\mathbf{3\,4}\,\overline{1}\,2] \rightarrow [\overline{4}\,\overline{3}\,\mathbf{\overline{1}\,2}] \rightarrow [\overline{4}\,\overline{3}\,\overline{2}\,\mathbf{1}] \rightarrow [\overline{4}\,\overline{3}\,\overline{2}\,\overline{1}]$).
Therefore, for the signed permutation $\sigma$ of Figure \ref{fig:prime}, the length
of a parsimonious perfect scenario is $14$.
\end{ex}

The following proposition is a summary of some of the key results of
\cite{berard-perfect} on \algo, that will play a central role in our
work.

\begin{proposition}[B\'erard {\em et al.}~\cite{berard-perfect}]
\label{thm:complex}
 Let $\sigma$ be a signed permutation of size~$n$.  Let \TsP be its
 strong interval tree, and denote by $p$ its number of prime nodes.
 Then the followings are true:
 \begin{enumerate}
\item \algo\mbox{} compute a parsimonious perfect scenario for
  $\sigma$ in worst-case time $\mathcal{O}(2^p\,n\sqrt{n \log n})$;
\item $\sigma$ is a commuting permutation if and only if $p = 0$;
\item if $\sigma$ is a commuting permutation, then a sorting scenario
  for $\sigma$ is perfect if and only if it consists of one reversal
  for every interval corresponding to a vertex of \TsP that has a sign
  different from its parent.
\end{enumerate}
\end{proposition}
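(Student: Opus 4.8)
Because all three assertions are quoted from \cite{berard-perfect}, my plan is to reduce everything to the one structural fact recalled in the text: in any perfect scenario for $\sigma$, each reversal is either a strong interval of $\sigma$, i.e.\ a vertex of \TsP, or the union of two or more consecutive children of a prime vertex of \TsP \cite{berard-perfect}. I would treat the three items in order, since items~2 and~3 are really corollaries of the analysis of item~1 in the special case $p=0$.

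For item~1 I would first settle correctness and then do the time bookkeeping. By the structural fact, fixing a sign at each still-unsigned prime vertex and then, vertex by vertex, sorting optimally the signed quotient permutation toward $Id$ (if the vertex is signed $+$) or $\overline{Id}$ (if signed $-$) yields a perfect scenario; conversely, up to reordering commuting reversals, every perfect scenario arises this way from one of the $2^p$ sign assignments, so a shortest scenario among the ones produced is a parsimonious perfect scenario. For the running time: \TsP and all its quotient permutations are computed once in $\mathcal{O}(n)$ time \cite{bergeron-computing}; for a fixed sign assignment, the cost at a prime vertex with $k_i$ children is $\mathcal{O}(k_i\sqrt{k_i\log k_i})$ via the algorithm of \cite{tannier-advances}, while the cost at each linear vertex and leaf is linear in its number of children. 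Since the internal vertices of an $n$-leaf tree have $n-1$ children in total, $\sum_i k_i\le n$, whence $\sum_i k_i\sqrt{k_i\log k_i}\le\sqrt{n\log n}\sum_i k_i=\mathcal{O}(n\sqrt{n\log n})$; multiplying by the $2^p$ assignments gives the stated bound. This part is routine once the structural fact is available.

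For item~2, the ``if'' direction is short: if $p=0$ then every internal vertex of \TsP is linear, so by the structural fact every reversal of a perfect scenario is a strong interval, and strong intervals pairwise commute by definition; hence the scenario produced by \algo is a commuting scenario and $\sigma$ is commuting. The ``only if'' direction is the delicate point, and I would prove its contrapositive: if \TsP has a prime vertex $I$ whose quotient permutation $\tau$ is simple of size $k\ge 4$, then no commuting scenario can exist. Indeed a commuting scenario is perfect, so pairwise commutation forces the reversals that touch the children of $I$ to form a laminar family of unions of consecutive children of $I$ (together with $I$ itself); but sorting the signed permutation induced by $\tau$ using only a laminar family of block reversals and single-block sign flips is impossible, because $\tau$, being simple with $k\ge 4$, has no non-trivial common interval and so cannot be ``unwound'' into $Id_k$ or $Id_k^m$ by nested-or-disjoint reversals. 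Making this last impossibility statement precise is exactly where the finer analysis of which intervals occur in a commuting scenario, carried out in \cite{berard-perfect}, is needed, and I expect it to be the main obstacle.

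For item~3, the ``if'' direction follows because the mismatch set consists of strong intervals, which pairwise commute, so it is a perfect scenario, and a direct parity computation (below) shows it sorts $\sigma$. For the ``only if'' direction, let $S$ be any perfect scenario sorting $\sigma$; since $p=0$, every reversal of $S$ is a strong interval, so all reversals of $S$ commute and the permutation reached depends only on the parity $c_v$ of the number of times each vertex $v$ of \TsP is reversed. Reversing a vertex $u$ reverses and re-signs every leaf below $u$, hence flips the local order of the children of every vertex in the subtree rooted at $u$; writing $d_v$ for the sum, modulo $2$, of $c_u$ over $v$ and its ancestors, the output equals $Id_n$ (resp.\ $\overline{Id_n}$) if and only if, for every vertex $v$, the parity $d_v$ is the one dictated by the sign of $v$ and by the chosen target. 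Since $d_v-d_{\mathrm{parent}(v)}\equiv c_v\pmod 2$, this forces $c_v$ to be odd exactly at the vertices whose sign differs from that of their parent (with the root absorbed into the choice between $Id_n$ and $\overline{Id_n}$); hence $S$ reverses each mismatch vertex an odd number of times and every other vertex an even number of times, and the shortest such $S$ — the parsimonious perfect scenario — is precisely one reversal per mismatch vertex, as claimed.
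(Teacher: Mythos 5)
The paper offers no proof of this proposition: it is stated explicitly as a summary of results from \cite{berard-perfect}, and the text defers both correctness and complexity to that reference. So there is no in-paper argument to measure your sketch against, and I can only assess it on its own terms. Your item~1 is the standard argument and is essentially fine (one small slip: in an $n$-leaf tree whose internal vertices all have at least two children, the internal vertices have up to $2n-2$ children in total, not $n-1$; the $\mathcal{O}(n)$ bound on $\sum_i k_i$ that you actually need still holds). Your parity argument for item~3 is also sound, but note that it literally establishes that a perfect scenario reverses each mismatch vertex an odd number of times and every other vertex an even number of times; the statement as written (``consists of one reversal for every mismatch vertex'') therefore follows only for parsimonious perfect scenarios, or under the convention that a scenario never repeats a reversal. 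You should say which reading you adopt, since a scenario that reverses some strong interval twice is still perfect and still sorts $\sigma$.

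The genuine gap is the ``only if'' direction of item~2. You reduce it to the claim that a simple quotient permutation of size $k\ge 4$ cannot be carried to $Id_k$ or $Id_k^m$ by a laminar (pairwise commuting) family of reversals of unions of consecutive children, and you then concede that proving this is ``the main obstacle'' and point back to \cite{berard-perfect}. But that claim \emph{is} the non-routine content of the equivalence $p=0 \Leftrightarrow \sigma$ commuting; deferring it to the reference whose result you are reproving leaves the item unproved. A workable way to close it: show first that in a commuting scenario every reversal is in fact a strong interval of $\sigma$ (no reversal can be a proper, non-singleton union of consecutive children of a prime vertex, since sorting a simple permutation forces overlapping reversals among the children, contradicting pairwise commutation); then the relative order of the children of a prime vertex $I$ can only ever be $\sigma_I$ or its reverse, and since the reverse of a simple permutation is simple, neither is $Id_k$ or $Id_k^m$, so the scenario cannot terminate at $Id_n$ or $\overline{Id_n}$. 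Until something of this kind is supplied, item~2 stands on citation rather than on your argument.
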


Hence it appears that prime vertices of the strong interval tree are
fundamental in the exponential worst-case behavior of \algo, and more
generally in the hardness of the problem of perfect sorting by
reversals. Indeed, an interpretation of the hardness result given
in~\cite{figeac-sorting} in terms of strong interval tree is that
perfect sorting by reversals is NP-complete for signed permutations whose
strong interval tree contains only prime nodes.

\section{On the number of prime vertices}
\label{sec:prime}

As we shall soon see, the average-time complexity of \algo\ can also
be bounded with the aid of strong interval trees. We use enumerative
results on simple permutations to determine the ``average shape'' of a
tree with $n$ leaves. This average shape is extremely simple and has a
single prime node. From this we can easily bound the average-time
complexity.


\subsection{Combinatorial preliminaries: strong interval trees and
  simple permutations}\label{ssec:combPre} The following formal
description of the underlying structure of the strong interval trees
is useful for our enumerative analysis. 

\begin{definition}\label{def:strong}\em 
  Let ${\cal T}_n$ be the family of plane trees satisfying the
  following properties:
  \begin{itemize}
  \item[P1.] each tree has $n$ leaves ($n$ is the \emph{size} of the
    trees of ${\cal T}_n$);
  \item[P2.] each leaf is labeled by $+$ or $-$;
  \item[P3.] the children of each internal vertex are totally ordered;
  \item[P4.] each internal vertex has at least two children;
  \item[P5.] if an internal vertex has $k$ children, it is labeled
    either by $Id_k$, or ${Id_k^m}$, or a simple permutation of size
    $k$ if $k\geq 4$;
  \item[P6.] no edge is incident to two vertices labeled by $Id$ or
    two vertices labeled by ${Id^m}$.
  \end{itemize}
\end{definition}
We previously noted that each permutation corresponds to a strong
interval tree. We prove next that this correspondence is bijective.

\begin{theorem}\label{thm:bijection}
  There is a bijection between the set of signed permutations of size~$n$
  and ${\cal T}_n$.
\end{theorem}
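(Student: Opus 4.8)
The plan is to exhibit the inverse map explicitly and to verify the two compositions. First, I would note that $\sigma\mapsto\mathbf T(\sigma)$ really does take values in $\mathcal T_n$, which is just a restatement of facts recalled in Section~\ref{sec:prelim}: $\mathbf T(\sigma)$ has $n$ leaves carrying the signs of $\sigma$ (P1, P2); it is a plane tree, so the children of every internal vertex are totally ordered (P3); a non-singleton strong interval splits into at least two maximal strong sub-intervals, so every internal vertex has at least two children (P4); the quotient permutation of an internal vertex with $k$ children is $Id_k$, ${Id_k^m}$, or a simple permutation of size $k\geq 4$ (P5); and no two linear vertices of the same kind are adjacent (P6).

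Next I would construct $\Phi\colon\mathcal T_n\to\{\text{signed permutations of size }n\}$ by a top-down ``inflation''. Assign to the root the interval $\{1,\dots,n\}$; if a vertex $v$ already carries an interval $R_v$ of consecutive integers, has label $\pi$ of size $k$ and ordered children $v_1,\dots,v_k$ with leaf-counts $n_1,\dots,n_k$, split $R_v$ into $k$ consecutive sub-blocks of sizes $n_1,\dots,n_k$ and give the sub-block of rank $\pi(j)$ to $v_j$. A leaf then receives a single integer which, together with its sign, becomes the entry of $\Phi(T)$ occurring at the left-to-right position of that leaf. Equivalently, $\Phi(T)$ is obtained by substituting $\Phi(T_1),\dots,\Phi(T_k)$ (each shifted onto the appropriate range of values) into the root label $\pi$, where $T_1,\dots,T_k$ are the principal subtrees of $T$; since $T_i\in\mathcal T_{n_i}$ with $n_i<n$, this recursion is well founded. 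That $\Phi(\mathbf T(\sigma))=\sigma$ is then immediate from the definitions: in $\mathbf T(\sigma)$ the content of each vertex is exactly the strong interval it represents, the root has content $\{1,\dots,n\}$, the children of a vertex $I$ partition its content in the order prescribed by $\sigma_I$, and the leaves are the singletons $\{|\sigma_i|\}$ read from left to right as in $\sigma$, bearing the signs of $\sigma$ — so unfolding $\Phi$ on $\mathbf T(\sigma)$ simply reads off $\sigma$.

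The substantive direction is $\mathbf T(\Phi(T))=T$, which I would prove by induction on $n$, the case $n=1$ being trivial. Write $\sigma=\Phi(T)$, let $\pi$ be the root label of $T$ and $B_1,\dots,B_k$ the contents, inside $\sigma$, of its principal subtrees. One checks: (i) each $B_i$ is a common interval of $\sigma$ (its leaves are consecutive in $\sigma$ and its values are consecutive integers); (ii) no common interval of $\sigma$ straddles the boundary of any $B_i$, hence each $B_i$ is a \emph{strong} interval; (iii) any strong interval $M$ of $\sigma$ commutes with every $B_j$, hence is either contained in some $B_j$, equal to a union of blocks, or all of $\{1,\dots,n\}$ — a proper union of at least two blocks corresponds to a common interval of $\pi$, which by P5 is trivial when $\pi$ is simple, and when $\pi$ is linear is straddled by a shifted union of consecutive blocks and so is not strong; combined with (ii) this shows that $B_1,\dots,B_k$ are exactly the maximal proper strong intervals of $\sigma$; (iv) by the ordering convention the quotient permutation of the root of $\mathbf T(\sigma)$ is $\pi$; (v) using (ii), the strong intervals of $\sigma$ contained in $B_i$ coincide with the strong intervals of the sub-permutation $\sigma|_{B_i}=\Phi(T_i)$, so by the induction hypothesis the subtree of $\mathbf T(\sigma)$ rooted at $B_i$ equals $\mathbf T(\Phi(T_i))=T_i$. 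Together, (iii)–(v) give $\mathbf T(\sigma)=T$.

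The main obstacle is step (ii): showing that no common interval of $\sigma=\Phi(T)$ crosses a block boundary. This is precisely the rigidity of the substitution decomposition, and it is exactly where P5 and P6 are indispensable. A common interval straddling $B_i$ would have to contain a set that is simultaneously a value-prefix (or value-suffix) and a position-prefix (or position-suffix) of the block $B_i$, together with a complementary piece of an adjacent block. A short case analysis on whether $\pi$ is simple or linear shows that this forces either $\pi$ to possess a non-trivial common interval (impossible when $\pi$ is simple, by P5) or the principal subtree $T_i$ to have a linear root of the same kind as $\pi$ (impossible when $\pi$ is linear, by P6); either way we reach a contradiction. Alternatively, one can settle (ii) at once by invoking the classical uniqueness of the substitution (equivalently, PQ-tree) decomposition; see~\cite{AA05}.
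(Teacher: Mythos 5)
Your proposal is correct and follows essentially the same route as the paper's: both directions are realized by the same top-down ``inflation'' that assigns consecutive value-blocks to the children of each vertex according to its label, the paper merely asserting (where you verify in detail via P5/P6, i.e.\ the uniqueness of the substitution decomposition) that the resulting blocks are exactly the strong intervals of $\Phi(T)$. One small slip in your first description of $\Phi$: the consecutive sub-blocks of $R_v$ should have sizes $n_{\pi^{-1}(1)},\dots,n_{\pi^{-1}(k)}$ so that $v_j$ receives a block of size $n_j$, as both your equivalent substitution formulation and the paper's formula $m_i=m+\sum_{j:\tau(j)<\tau(i)}s_j$ make precise.
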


\begin{proof}
  First, it is immediate to see that a unique tree of ${\cal T}_n$ can be
  obtained from a signed permutation $\sigma$ of size~$n$. Indeed, it is
  enough to modify its strong interval tree \TsP by labeling each leaf
  representing an element of $\sigma$ by its sign, and each internal
  vertex corresponding to a strong interval $I$ by the quotient
  permutation $\sigma_I$.

  To get a signed permutation $\sigma_T$ from a tree $T$ of ${\cal
    T}_n$, we assign signed integers to its leaves and $\sigma_T$ will
  be obtained by reading the leaves from left to right. The absolute
  values of the integers labeling the leaves are obtained by a
  top-down approach. We first assign the set of integers $I= \{1,
  \ldots n\}$ to the root, together with a variable $m$ set to $1$
  indicating the minimal value of $I$. We propagate this assignment
  from the root to the leaves as follows. Consider a node labeled by a
  permutation $\tau$ with $k$ children rooting subtrees of sizes $s_1, \ldots, s_k$
  from left to right, that has been assigned the set $I$ of consecutive
  integers and the variable $m = \min(I)$. Then assign sets $I_1,
  \ldots, I_k$ and variables $m_1, \ldots, m_k$ to its children so
  that $m_i = m+ \sum_{j : \tau(j)<\tau(i)} s_j$ and $I_i = \{m_i,
  \ldots, m_i+s_i-1\}$. At the end of this process, every leaf is
  labeled by an integer $m$ and a set $I=\{m\}$. The signed integer
  assigned to such a leaf is then either $m$ if the leaf has label $+$
  in $T$ or $-m$ if it has label $-$. Notice that the sets $I$
  assigned to the nodes of $T$ actually correspond to the strong
  intervals of $\sigma_T$, ensuring that the above mapping is a bijection.

\end{proof}

Recall that simple permutations are the permutations that have no
non-trivial common interval, and are used here as quotient
permutations of prime nodes. The enumeration of simple permutations
was investigated in \cite{albert-enumeration}. The authors prove that
this enumerative sequence is not P-recursive and there is no known
closed formula for the number of simple permutations of a given size.
Nonetheless they are able to compute a complete asymptotic expression
for the number of simple permutations of size $n$.

\begin{theorem}[Albert {\em et al.}~\cite{albert-enumeration}]\label{thm:simple}
  Let $s_n$ be the number of simple permutations of size $n$. Then
  \begin{equation} 
    s_n = \frac{n!}{e^2} \left(1 -\frac{4}{n}
    +\frac{2}{n(n-1)} +\mathcal{O}\left(\frac{1}{n^3}\right)\right) \text{ when } n
    \rightarrow \infty.
    \label{eq:sn}\end{equation}
\end{theorem}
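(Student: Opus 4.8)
The approach, essentially that of~\cite{albert-enumeration}, is to reduce the enumeration of simple permutations to the compositional inversion of the divergent series $f(x)=\sum_{n\ge 1}n!\,x^n$, and then to extract the asymptotics of that inverse directly from a coefficient recursion, since no singularity-analysis tool is available here. First I would set up a functional equation from the substitution decomposition of permutations (the unsigned analogue of the correspondence behind Theorem~\ref{thm:bijection}): every permutation of size at least $2$ is obtained either by inflating a simple permutation of size $k\ge 4$ by $k$ arbitrary nonempty permutations, or as a direct sum $\alpha\oplus\beta$ with $\alpha$ $\oplus$-indecomposable (i.e.\ not itself a direct sum of two nonempty permutations), or as the analogous skew sum. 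Writing $f(x)=\sum_{n\ge1}n!\,x^n$ for the generating function of all nonempty permutations, $s(y)=\sum_{n\ge4}s_n y^n$, and using that the $\oplus$-indecomposable permutations are counted by $f/(1+f)$, these three disjoint and exhaustive cases (plus the unique permutation of size one) give
\begin{equation}
f \;=\; x+\frac{2f^2}{1+f}+s(f).
\label{eq:funceq}
\end{equation}

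Next I would invert \eqref{eq:funceq}. Let $g:=f^{\langle-1\rangle}$, which is a well-defined formal power series since $f=x+O(x^2)$. Substituting $y=f(x)$ and $x=g(y)$ into \eqref{eq:funceq} yields
\begin{equation}
s(y)\;=\;y-g(y)-\frac{2y^2}{1+y}.
\label{eq:sofg}
\end{equation}
Since $\frac{2y^2}{1+y}=2\sum_{k\ge 2}(-1)^k y^k$, identity \eqref{eq:sofg} gives, for every $n\ge 4$, the exact relation $s_n=-[y^n]g(y)-2(-1)^n$. So the theorem is equivalent to showing that $b_n:=[y^n]g(y)$ satisfies $b_n=-\frac{n!}{e^2}\bigl(1-\frac4n+\frac{2}{n(n-1)}+\mathcal O(n^{-3})\bigr)$, and all the remaining work concerns the inverse series $g$.

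For the asymptotics of $b_n$ I would start from $f(g(y))=y$, which rearranges to the fixed-point equation $g(y)=y-\sum_{m\ge 2}m!\,g(y)^m$, hence to the recursion
\begin{equation}
b_n\;=\;-\sum_{r=0}^{n-2}(n-r)!\;\bigl[y^n\bigr]\,g(y)^{\,n-r}\qquad(n\ge 2).
\label{eq:brec}
\end{equation}
The structural observation that makes \eqref{eq:brec} usable is that, for each fixed $r$ and all $n\ge 2r$, the quantity $[y^n]g(y)^{n-r}$ is a polynomial in $n$ of degree $r$ depending only on $b_1,\dots,b_{r+1}$, whose leading coefficient is exactly $\frac{(-2)^r}{r!}$ because $b_1=1$ and $b_2=-2$ (one checks $[y^n]g^n=1$, $[y^n]g^{n-1}=-2(n-1)$, $[y^n]g^{n-2}=2(n-2)^2$, and so on). Writing $(n-r)!=n!\big/\bigl(n(n-1)\cdots(n-r+1)\bigr)$ and summing the absolutely convergent series $\sum_{r\ge 0}\frac{(-2)^r}{r!}=e^{-2}$, a termwise passage to the limit gives $b_n=-\frac{n!}{e^2}\bigl(1+\mathcal O(1/n)\bigr)$; keeping two further coefficients of each polynomial $[y^n]g^{n-r}$ for the relevant small $r$, together with the expansion $n^r/\bigl(n(n-1)\cdots(n-r+1)\bigr)=1+\binom r2/n+\mathcal O(n^{-2})$, refines this to the announced three-term expansion. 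Substituting into $s_n=-b_n-2(-1)^n$ then absorbs the term $2(-1)^n$ into the $\mathcal O(n!\,n^{-3})$ error.

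The delicate point is precisely this last analysis, for two intertwined reasons. First, $f$ and $g$ are genuinely divergent series (coefficients of order $n!$), so one must argue entirely at the level of the recursion \eqref{eq:brec}; in particular, to control the tail $r\to n$ of \eqref{eq:brec}, where the polynomial description of $[y^n]g^{n-r}$ breaks down, one first establishes an a priori bound $|b_n|\le C^n n!$ by induction from \eqref{eq:brec}. Second, there is a large cancellation to track: the single term $r=0$ already contributes $-n!$, an order of magnitude above the true value $-n!/e^2$, so the answer emerges only after resumming the contributions of all $r$, and this resummation must be controlled uniformly enough to also deliver the $1/n$ and $1/(n(n-1))$ corrections and not merely the leading order.
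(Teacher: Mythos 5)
The paper itself offers no proof of this statement: Theorem~\ref{thm:simple} is imported from~\cite{albert-enumeration} and used as a black box, so your proposal can only be measured against the argument of that reference. Your reduction is exactly theirs and is correct: the substitution decomposition yields $f=x+\tfrac{2f^2}{1+f}+s(f)$ (this checks out through order $x^4$), compositional inversion gives $s(y)=y-g(y)-\tfrac{2y^2}{1+y}$ with $g=f^{\langle-1\rangle}$, hence the exact identity $s_n=-b_n-2(-1)^n$ with $b_n=[y^n]g(y)$, and everything reduces to the asymptotics of $b_n$ via the recursion $b_n=-\sum_{r=0}^{n-2}(n-r)!\,[y^n]g^{n-r}$.

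The gap is in your account of where the correction terms $-4/n$ and $2/(n(n-1))$ come from. The fixed-$r$ (``head'') terms do produce the constant $e^{-2}$, but their $1/n$ corrections cancel identically: the term of index $r$ equals
\[
n!\left(\frac{(-2)^r}{r!}-\binom{r+1}{2}\frac{(-2)^r}{r!}\,\frac1n+\mathcal{O}(n^{-2})\right)
\]
(for $r=1,2,3,4$ the coefficients of $n!/n$ are $2,-6,8,-\tfrac{20}{3}$), and
\[
\sum_{r\ge0}\frac{(-2)^r}{r!}\binom{r+1}{2}=\tfrac12\left.(x^2+2x)e^x\right|_{x=-2}=0.
\]
The entire $-4/n$ term in the theorem instead comes from the extreme tail $r=n-2$, i.e.\ from $-2!\,[y^n]g^2=-2\sum_{i+j=n}b_ib_j\sim-4b_1b_{n-1}\sim\tfrac{4(n-1)!}{e^2}=-\tfrac{n!}{e^2}\cdot(-\tfrac4n)$; likewise the terms $m=2,3$ must be evaluated to second, respectively first, order to obtain $2/(n(n-1))$. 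So the tail cannot merely be \emph{bounded} as you propose --- it must be asymptotically \emph{evaluated}, by bootstrapping the already-established leading asymptotics of $b_{n-1}$ and $b_{n-2}$ back into the recursion. Moreover the a priori bound $|b_n|\le C^n n!$ is too weak for this role: for $C>1$ it only gives $m!\,[y^n]g^m=\mathcal{O}(C^n n!/n)$ for $m=2$, which is not even $o(n!)$, and the correct and needed bound is $|b_n|=\mathcal{O}(n!)$, which still leaves an unevaluated $\mathcal{O}(n!/n)$ error. As written, your scheme would at best yield $s_n=\tfrac{n!}{e^2}(1+\mathcal{O}(1/n))$ and would assign the value $0$ to the coefficient of $1/n$.
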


\subsection{Average shape of strong interval trees}

A \emph{twin} in a strong interval tree is a vertex of degree $2$
such that each of its two children is a leaf. Thus, a twin is a linear
vertex. 

Let us notice that all results in this section apply both to signed
permutations and unsigned permutations: the two mains reasons for it
are that the definition of intervals in a permutation ignores the
signs of the elements, and that $2^n$ signed permutations are
associated to any unsigned permutation $\sigma$ of size~$n$, and this
number does not depend on $\sigma$.

We first state the main result of this section.

\begin{theorem}\label{thm:formePermutation}
  Asymptotically, with probability $1$, a random permutation of size
  $n$ has a strong interval tree of the form:
  \begin{itemize}
  \item the root is a prime vertex;
  \item every child of the root is either a leaf or a twin.
  \end{itemize}
  Moreover, the probability distribution of the number $k$ of twins is
  given by: $P(k) = \frac{2^k}{e^2 k!}$. Consequently, the expected
  number of twins is $2$.
\end{theorem}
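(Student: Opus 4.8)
The plan is to use the bijection of Theorem~\ref{thm:bijection} to translate the statement into a counting problem on the family $\mathcal{T}_n$, and then to extract asymptotics from the expression for $s_n$ in Theorem~\ref{thm:simple}. Since the number of signed permutations of size $n$ is $2^n$ times the number of unsigned ones, and the tree structure (ignoring leaf signs) depends only on the unsigned permutation, it suffices to work with unsigned permutations and with the underlying plane trees obtained from $\mathcal{T}_n$ by forgetting the $\pm$ labels on leaves. The first step is therefore to isolate the family of trees described in the theorem statement: a prime root with $k$ children that are twins and $n-2k$ children that are leaves. For such a shape, the number of permutations realizing it equals (number of ways to choose which children are twins among the root's children, i.e.\ $\binom{n-k}{k}$ ordered appropriately) times (number of simple permutations of size $n-k$, the quotient permutation at the root); one must be slightly careful because the root has $n-k$ children total, of which $k$ are twins, and the children are linearly ordered, so the count is $\binom{n-k}{k} s_{n-k}$. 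Summing over $k$ gives the number $T_n$ of ``good'' trees, and the main claim is $T_n / n! \to 1$.

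The second step is the asymptotic estimate. Using $s_{n-k} = \frac{(n-k)!}{e^2}\bigl(1 + O(1/(n-k))\bigr)$ from Theorem~\ref{thm:simple}, the term indexed by $k$ contributes $\binom{n-k}{k}\frac{(n-k)!}{e^2}\bigl(1+o(1)\bigr)$, and $\binom{n-k}{k}(n-k)! = \frac{(n-k)!\,(n-k)!}{k!\,(n-2k)!}$; dividing by $n!$ and using $(n-k)!/n! = 1/(n(n-1)\cdots(n-k+1)) \sim n^{-k}$ together with $(n-k)!/(n-2k)! \sim n^{k}$ shows the $k$-th term of $T_n/n!$ tends to $\frac{1}{e^2 k!}\cdot 2^{0}$... more precisely I expect the factor $2^k$ to emerge: one should track constants carefully, since $(n-k)!/(n-2k)! = (n-k)(n-k-1)\cdots(n-2k+1)$ has $k$ factors each $\sim n$, while the reciprocal falling factorial from $(n-k)!/n!$ also has $k$ factors $\sim n$, so naively the ratio is $\sim 1/k!$ and one seems to be missing the $2^k$. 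The resolution — and this is the delicate point — is that the ``twin vs.\ leaf'' partition of the root's children is not the whole story: a child that is a leaf contributes a $+$ or $-$ sign but a twin also carries an internal linear label ($Id_2$ or $Id_2^m$) which in the unsigned world is forced, yet contributes a factor when one compares to the count of \emph{permutations} (unsigned) of size $n$ versus trees; reconciling these bookkeeping factors so that $P(k) = 2^k/(e^2 k!)$ emerges, and checking $\sum_k P(k) = 1$ (which is just $e^2/e^2$), is the crux. I would do this by writing the exponential generating function: the class of good trees has EGF roughly $S(x)\,e^{\text{(twin term)}}$ where $S(x) = \sum s_n x^n/n!$ and the twin term accounts for a pointed/substituted structure, and then read off coefficients; the functional-equation bookkeeping of substitution decomposition makes the $2^k$ transparent.

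The third step is to upgrade the counting estimate to the probabilistic statements. Having shown $T_n/n! \to 1$, the ``with probability $1$'' claim for the shape is immediate: the proportion of size-$n$ permutations whose strong interval tree is \emph{not} of the stated form is $1 - T_n/n! \to 0$. For the distribution of the number of twins, one argues that for each fixed $k$, the number of good trees with exactly $k$ twins, divided by $n!$, converges to $2^k/(e^2 k!)$ — this is exactly the per-$k$ limit computed in step two — and since these limits sum to $1$, dominated convergence (or a direct tail bound using the rapid decay of $2^k/k!$ and a uniform bound on the $k$-th term) gives convergence in distribution of the random variable ``number of twins'' to a Poisson$(2)$ law. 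The expected number of twins is then $\sum_k k \cdot 2^k/(e^2 k!) = 2$, provided one also checks uniform integrability to pass the expectation through the limit; alternatively one can compute $\mathbb{E}[\#\text{twins}]$ directly for finite $n$ as a ratio of sums involving $s_{n-k}$ and take the limit, which avoids the UI subtlety. I expect the main obstacle to be precisely the constant-tracking in step two — making sure the $2^k$ is genuinely there and not an artifact of conflating signed and unsigned counts — and the cleanest route is the generating-function computation rather than hand-to-hand asymptotics of binomial-times-factorial expressions.
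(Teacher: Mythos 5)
Your overall architecture matches the paper's (reduce to unsigned permutations, count the trees with a prime root and $k$ twins, feed in $s_m\sim m!/e^2$, obtain a Poisson$(2)$ limit), and your step three is even a viable alternative to the paper's: the paper establishes the ``probability $1$'' shape claim by a separate lemma bounding the proportion of permutations having a common interval of length between $3$ and $n-1$ by $\mathcal{O}(1/n)$ (Lemma~\ref{lem:albert-enumeration-generalized}, adapted from Albert \emph{et al.}), whereas your route — per-$k$ proportions converge to $2^k/(e^2k!)$, these sum to $1$, and the total proportion is bounded by $1$, so by a Fatou-type argument the good shape has probability tending to $1$ and the twin count converges in distribution — would also work.

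However, there is a genuine gap at the central counting step, and your attempted repair of it is incorrect. The number of unsigned permutations of size $n$ whose strong interval tree is a prime root with exactly $k$ twins is $2^k\binom{n-k}{k}s_{n-k}$, not $\binom{n-k}{k}s_{n-k}$: the root has $n-k$ children, one chooses which $k$ of them are twins ($\binom{n-k}{k}$ ways), the quotient permutation at the root is a simple permutation of size $n-k$ ($s_{n-k}$ ways), and — this is the factor you are missing — each twin is itself a linear vertex labeled $Id_2$ or $Id_2^m$, i.e.\ its two consecutive values may appear in increasing or decreasing order, giving two genuinely distinct \emph{unsigned} permutations per twin. You correctly sense that the $2^k$ should come from the twins' internal labels, but you then assert that this label ``in the unsigned world is forced'' and that the factor arises from some signed-versus-unsigned bookkeeping; that is exactly backwards — it is a free binary choice per twin, with no sign conflation involved — and the generating-function computation you defer to is never carried out. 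With the corrected count the per-$k$ limit is immediate, since
$\frac{2^k\binom{n-k}{k}s_{n-k}}{n!}=\frac{2^k}{k!}\cdot\frac{(n-k)!}{n!}\cdot\frac{(n-k)!}{(n-2k)!}\cdot\frac{s_{n-k}}{(n-k)!}\to\frac{2^k}{e^2k!}$ for fixed $k$, and the rest of your argument then goes through.
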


Before proving this result, we can notice that it overlaps with
previous results on the expected number of common intervals in
permutations. In their paper introducing the problem of computing the
common intervals of a permutation~\cite{uno-fast}, Uno and Yagiura
showed that the expected number of common intervals of length $2$ in a
permutation is $2-2/n$ and that, for all $\ell > 2$, the expected
number of common intervals of size $\ell$ is $0$ for $n$ large
enough. This implies immediately our result on the shape of the strong
interval tree. Later, Corteel~\emph{et~al.} showed
in~\cite{corteel-common} that the probability distribution of the
number of common intervals of size $2$ follows a Poisson law, with
mean $2$, a result already proved by Kaplanski, in relation with runs
in permutations~\cite{kaplanski-asymptotic}. A similar result was also
proved independently
in~\cite{XBS08}. Theorem~\ref{thm:formePermutation} gathers all these
results together, expressed in terms of the strong interval
tree. Moreover, the proof we give here is new and relies on
enumerative results on simple permutations.

The proof of Theorem~\ref{thm:formePermutation} follows from
Lemma~\ref{lem:albert-enumeration-generalized} below and Theorem
\ref{thm:simple}.


\begin{lemma} \label{lem:albert-enumeration-generalized}
  If $p_{n,k}$ denotes the number of unsigned \footnote{For signed
    permutations, the denominator $n!$ should be replaced by $2^n
    n!$.} permutations of size $n$ which contain a common interval $I$
  of length $k$ then for any fixed positive integer $c$:
  $$\sum_{k=c+2}^{n-c} \frac{p_{n,k}}{n!} \in \mathcal{O}(n^{-c}).$$
\end{lemma}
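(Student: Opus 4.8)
The plan is to count, for each $k$ in the range $c+2 \le k \le n-c$, the number of unsigned permutations of size $n$ that possess a common interval of content exactly some fixed $k$-element interval $\{a+1,\dots,a+k\}$ of $\{1,\dots,n\}$, and then sum over all admissible positions $a$ and over $k$. A permutation $\sigma$ has the set $\{a+1,\dots,a+k\}$ as a common interval precisely when these $k$ values occupy $k$ consecutive positions in $\sigma$ and, of course, are exactly the values $\{a+1,\dots,a+k\}$. The number of such permutations is obtained by treating the block of $k$ consecutive positions as a single super-element: there are $n-k+1$ choices for where the block starts, $k!$ ways to arrange the values inside the block, and $(n-k)!$ ways to arrange the remaining $n-k$ values together with the block among the $n-k+1$ slots. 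Hence the count for a fixed content interval of size $k$ is $(n-k+1)\cdot k!\cdot(n-k)!$, and summing over the $n-k+1$ possible contents $\{a+1,\dots,a+k\}$ gives at most $(n-k+1)^2 k!\,(n-k)!$ permutations having \emph{some} common interval of length $k$. (This is an over-count, since a permutation may have several such intervals, but an upper bound is all we need.) Therefore
\begin{equation}
p_{n,k} \le (n-k+1)^2\, k!\,(n-k)!,
\label{eq:pnk-bound}
\end{equation}
and dividing by $n!$,
\begin{equation}
\frac{p_{n,k}}{n!} \le \frac{(n-k+1)^2\, k!\,(n-k)!}{n!} = \frac{(n-k+1)^2}{\binom{n}{k}}.
\label{eq:ratio-bound}
\end{equation}

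Next I would estimate the right-hand side of \eqref{eq:ratio-bound} on the range $c+2 \le k \le n-c$. The binomial coefficient $\binom{n}{k}$ is minimized at the endpoints of this range, so it suffices to handle $k$ near $c+2$ and $k$ near $n-c$ separately, the intermediate values being dominated. For $k$ close to $n$, say $k = n-j$ with $0 \le j \le c$, we have $\binom{n}{k} = \binom{n}{j} \ge \binom{n}{c}$ (for $n$ large), which is of order $n^c$, while $(n-k+1)^2 = (j+1)^2$ is bounded by a constant; this already gives a contribution of order $n^{-c}$ for each such $k$, and there are only $c+1$ of them. For $k$ close to $c+2$, say $k = c+2+i$, we have $\binom{n}{k} \ge \binom{n}{c+2}$, which is of order $n^{c+2}$, while $(n-k+1)^2$ is of order $n^2$; so each term is of order $n^{2}/n^{c+2} = n^{-c}$, and again only a bounded number of values of $k$ contribute at this order. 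For the bulk of the range the binomial coefficient is exponentially large in $n$ (indeed $\binom{n}{k} \ge \binom{n}{\min(k,n-k)} \ge \binom{n}{c+2}$ throughout, and much larger in the middle), so these terms are negligible; summing the at most $n$ terms, each $O(n^{-c})$ with the dominant ones genuinely of that order, yields $\sum_{k=c+2}^{n-c} p_{n,k}/n! \in \mathcal{O}(n^{-c})$.

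To organize the estimate cleanly I would split the sum as $\sum_{k=c+2}^{n-c} = \sum_{k=c+2}^{\lfloor n/2\rfloor} + \sum_{k=\lfloor n/2\rfloor+1}^{n-c}$, and in the first sum use $\binom{n}{k}\ge\binom{n}{c+2}$ together with $(n-k+1)^2 \le n^2$ to bound each term by $n^2/\binom{n}{c+2} = O(n^{-c})$, and further note that for $k \ge c+3$ the ratio $\binom{n}{k+1}/\binom{n}{k} = (n-k)/(k+1)$ keeps the binomial growing (until the middle), so the sum over all these $k$ is still $O(n^{-c})$ after accounting for the geometric-type decay away from $k=c+2$; symmetrically for the second sum with $j = n-k \in [c, n/2)$, using $\binom{n}{k} = \binom{n}{j} \ge \binom{n}{c}$ and $(n-k+1)^2 = (j+1)^2$, where now the leading term is at $j=c$ giving $(c+1)^2/\binom{n}{c} = O(n^{-c})$ and the tail decays. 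The only mild subtlety — and the step I expect to require the most care — is making rigorous the claim that the tails decay fast enough that the whole sum is $O(n^{-c})$ rather than merely $O(n \cdot n^{-c})$; this is handled by the observation that away from $k = c+2$ (respectively $j = c$) the binomial coefficients grow at least geometrically, so the series is dominated by its first few terms, each of which is $O(n^{-c})$. Putting the two pieces together gives the stated bound.
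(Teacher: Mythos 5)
Your proof is correct and follows essentially the same route as the paper: the same over-counting bound $p_{n,k}\le (n-k+1)^2\,k!\,(n-k)! = (n-k+1)\,k!\,(n-k+1)!$, the same rewriting of each term as $(n-k+1)^2/\binom{n}{k}$, and the same observation that only the extremal values of $k$ contribute at order $n^{-c}$. Your tail estimate via geometric growth of the binomial coefficients is a slightly more elaborate (but valid) version of the paper's simpler remark that every non-extremal term is $\mathcal{O}(n^{-c-1})$ and there are fewer than $n$ of them.
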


\begin{proof}
  The proof of Lemma~\ref{lem:albert-enumeration-generalized} is
  essentially identical to Lemma 7 of~\cite{albert-enumeration}:
  We have $p_{n,k} \leq (n-k+1) k!  (n-k+1)!$. Indeed, the right-hand
  side counts the number of quotient permutations corresponding to $I$
  (which is $k!$), the possible values of the minimal element of $I$
  ($n-k+1$) and the structure of the rest of the permutation with one
  more element for the insertion of $I$ ($(n-k+1)!$).  Only the
  extremal terms of the sum can have magnitude ${\mathcal O}(n^{-c})$
  and the remaining terms have magnitude ${\mathcal O}(n^{-c-1})$.
  Since there are fewer than $n$ terms the result of Lemma
  \ref{lem:albert-enumeration-generalized} follows.
\end{proof}

\begin{proof}[Proof of Theorem~\ref{thm:formePermutation}]
  Lemma \ref{lem:albert-enumeration-generalized} with $c = 1$ gives
  that the proportion of non-simple permutations with at least one
  common interval of size greater than or equal to $3$ is
  $\mathcal{O}(n^{-1})$. But permutations whose common intervals are
  only of size $1,2$ or $n$ are exactly permutations whose strong
  interval tree has a prime root and every child is either a leaf or a
  twin.

  Similarly, the number of permutations whose strong interval tree has
  the form of a prime root with $k$ twins is $s_{n-k} {{n-k} \choose
    k} 2^k$.  Given the asymptotic estimate of $s_n$ in
  Equation~(\ref{eq:sn}), we compute the asymptotic estimate for the
  number of such permutations to be~$\frac{n!2^k}{e^2 k!}$, proving
  Theorem \ref{thm:formePermutation}.
\end{proof}

This result has an immediate corollary in terms of perfect sorting by
reversals: the probability that a signed permutation corresponds to an
instance that requires an exponential time computation to be solved
tends to $0$ as $n$ grows.

\begin{corollary}
  \label{thm:complexity} \algo\ runs in $\mathcal{O}(n \sqrt{n\log n})$
  time with probability $1$ as $n \rightarrow \infty$.
\end{corollary}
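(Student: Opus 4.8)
The plan is to derive the corollary directly from Proposition~\ref{thm:complex}(1) and Theorem~\ref{thm:formePermutation}, so the argument is essentially a matter of bookkeeping. Proposition~\ref{thm:complex}(1) bounds the worst-case running time of \algo\ on a signed permutation $\sigma$ by $\mathcal{O}(2^p\, n \sqrt{n \log n})$, where $p$ is the number of prime vertices of the strong interval tree \TsP; since the $n\sqrt{n\log n}$ factor depends only on the size, the only parameter left to control for a random input is $p$.

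First I would introduce the event $A_n$ that the strong interval tree of a uniformly random signed permutation of size $n$ has the shape described in Theorem~\ref{thm:formePermutation}, namely a prime root whose every child is a leaf or a twin. That theorem asserts precisely that $P(A_n) \to 1$ as $n \to \infty$. On $A_n$ the root is prime, so $p \geq 1$; and every non-root internal vertex is a twin, hence (being a degree-two vertex with two leaf children, with quotient permutation $Id_2$ or $Id_2^m$) a linear, not a prime, vertex, contributing nothing to $p$. Thus $p = 1$ on $A_n$, and Proposition~\ref{thm:complex}(1) yields a running time $\mathcal{O}(2^1\, n\sqrt{n\log n}) = \mathcal{O}(n\sqrt{n\log n})$ for every such $\sigma$ (the tree itself is built in linear time, and the single prime node is handled by at most $2^p = 2$ calls to the sorting-by-reversals subroutine of cost $\mathcal{O}(n\sqrt{n\log n})$ each).

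Putting the two facts together, the probability that \algo\ runs in time $\mathcal{O}(n\sqrt{n\log n})$ on a random signed permutation of size $n$ is at least $P(A_n)$, which tends to $1$; this is exactly the claim. There is no real obstacle beyond this assembly: the substantive work is already contained in Theorem~\ref{thm:formePermutation}. The one point that deserves a line of care is the equality $p = 1$ (and not merely $p \leq 1$) on $A_n$, i.e.\ that a twin is a linear rather than a prime vertex, which is immediate from the definition of a twin. One could additionally note that the vanishing fraction of inputs outside $A_n$ that are nonetheless cheap — commuting permutations, for which $p = 0$ — also respect the bound, but this is not needed for the statement.
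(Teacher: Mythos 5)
Your proposal is correct and is exactly the argument the paper intends: the corollary is stated as an immediate consequence of Theorem~\ref{thm:formePermutation} combined with Proposition~\ref{thm:complex}(1), with the key observation being that on the asymptotically almost sure tree shape the root is the unique prime vertex (twins being linear), so $p=1$ and the bound $\mathcal{O}(2^p\,n\sqrt{n\log n})$ collapses to $\mathcal{O}(n\sqrt{n\log n})$. The paper leaves this bookkeeping implicit; you have simply written it out.
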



\subsection{Average  time complexity of perfect sorting by reversals}
Further analysis of the tree family ${\cal T}_n$ yields a polynomial
bound on the average-time complexity of
\algo\ (Theorem~\ref{thm:avg}).

Consider the following sum, which is central in the description of the
complexity of the algorithm:
\[P_n = \frac{1}{T_n}\sum_p 2^p T_{n,p}.\]
Here $T_n$ is the number of strong interval trees with $n$ leaves
($T_n=|{\cal T}_n|=2^nn!$ from Theorem~\ref{thm:bijection}) and
$T_{n,p}$ is the number of such trees with $p$ prime vertices.  The
key step in the algorithm complexity result is essentially reduced to
showing $P_n\in \mathcal{O}(1)$.

As an intermediate step, we find a bound on $U_{n,p}$, the number of
\emph{unsigned} permutations of size $n$ whose strong interval trees
contain $p$ prime vertices, when $p \geq 2$.

\begin{lemma}
  \label{lem:avgu}
  The number $U_{n,p}$ of unsigned permutations of size $n$ whose
  strong interval trees contain $p$ prime vertices with $p \geq 2$ is
  at most $ 48 \frac{(n-1)!}{2^p}$.
\end{lemma}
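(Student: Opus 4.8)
The plan is to bound $U_{n,p}$ by encoding an unsigned permutation of size $n$ with $p \geq 2$ prime vertices in terms of (i) the unlabelled combinatorial structure of its strong interval tree, and (ii) the data needed to recover the actual permutation from that structure, and then summing over all admissible tree shapes. The starting observation is that each prime vertex has at least $4$ children, so a tree with $p$ prime vertices has at least $3p+1$ edges "forced" by the prime branching; more importantly the presence of $p \geq 2$ primes constrains the number of leaves lying below prime vertices, which is what makes the exponential factor $2^{-p}$ appear. I would set up the count so that the quotient permutations attached to the prime vertices contribute roughly $\prod s_{k_i}$ choices (with $k_i$ the number of children of the $i$-th prime), and use the crude bound $s_k \leq k!$ from Theorem~\ref{thm:simple} together with the fact that $\sum k_i$ is controlled by $n$.

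Concretely, first I would note that given the full tree structure of $\mathbf{T}(\sigma)$ as an element of ${\cal T}_n$ with the quotient permutations at internal vertices specified, the permutation $\sigma$ is determined (this is exactly the bijection of Theorem~\ref{thm:bijection}, restricted to the unsigned case where we forget leaf signs). So $U_{n,p}$ is at most the number of such decorated trees. I would then split the decoration into: the shape of the tree (a plane tree with $n$ leaves, each internal vertex of arity $\geq 2$, and the alternation rule P6); a labelling of each internal vertex as $Id$, $Id^m$, or "prime with a chosen simple permutation of the right size"; subject to exactly $p$ prime vertices. Bounding the number of shapes and non-prime labellings by something like $C \cdot n!$ for an absolute constant (or more carefully, absorbing it into the recursion), and the prime contributions by $\prod_{i=1}^p k_i!$ where the $i$-th prime has $k_i \geq 4$ children, the product $\prod k_i!$ is dominated — after accounting for the $n!$ already spent on the leaves — by a factor that decays like $2^{-p}$ because each $k_i! \leq k_i!\,$ but the "budget" of factorials available is fixed at $n!$ and each extra prime of arity $\geq 4$ costs at least a factor comparable to $4!$ relative to a factor of $2$ for replacing it with binary linear structure. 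The cleanest route is probably an inductive/recursive argument on the tree: peel off the root, whose children root subtrees of sizes $n_1,\dots,n_m$ with $\sum n_i = n$ and $\sum p_i = p$ (or $p-1$ if the root is prime), and push through the bound $\prod (n_i - 1)! \cdot \binom{n-1}{n_1-1,\dots} \leq (n-1)!$ type identities, with the $2^{-p}$ and the constant $48$ emerging from handling the base cases and the prime-root case.

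The main obstacle I expect is getting the absolute constant right and, relatedly, correctly handling the interaction between the "at least two children" constraint, the linear-alternation rule P6, and the requirement $k_i \geq 4$ at primes — these are exactly what make a naive bound blow up, and one has to be careful that a long chain of alternating linear vertices does not contribute extra factorial-sized freedom (it contributes only binomial coefficients, since the quotient permutation of a linear vertex is determined up to direction). I would isolate that by noting that the "interesting" freedom — the factorial-sized freedom — lives only at prime vertices and at the very bottom (the leaf order within a maximal linear chunk is essentially fixed once the chunk's internal tree is fixed). A secondary technical point is that we want the bound for $p \geq 2$ specifically: for $p = 1$ the analogous estimate would not have the $2^{-p}$ decay with a clean constant (indeed $P_n \in \mathcal{O}(1)$ will need the $p=0,1$ terms handled separately via Theorem~\ref{thm:formePermutation}), so I would make sure the induction genuinely uses $p \geq 2$, presumably by checking that with two or more primes the total arity sacrificed forces the extra $2^{-p}$ with room to spare, yielding the explicit $48$.
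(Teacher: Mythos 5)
Your overall plan --- reduce to counting decorated trees via the bijection of Theorem~\ref{thm:bijection}, bound the freedom at a prime vertex of arity $k$ by $s_k \leq k!$, and run an induction on the tree structure --- is in the right spirit, and one ingredient (the bound $k!$ on the number of one-prime trees with $k$ leaves) coincides with what the paper uses. But there is a genuine gap at the one step that matters: you never actually produce the factor $2^{-p}$. The assertion that ``each extra prime of arity $\geq 4$ costs at least a factor comparable to $4!$ relative to a factor of $2$'' is a heuristic, not an estimate, and the intermediate bound ``shapes and non-prime labellings $\leq C\cdot n!$, then multiply by $\prod k_i!$ for the primes'' cannot be right as stated: the target $48\,(n-1)!/2^p$ is already smaller than $n!$, so any scheme that first spends $n!$ and then multiplies by further factorials has to be reorganized from scratch rather than patched. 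The paper's mechanism is concrete and different from your root decomposition: it peels off one prime at a time, writing a tree with $p+1$ primes as a tree $T_1$ with $p$ primes in which one leaf is blown up into a tree $T_2$ having a single prime at its root; the induction hypothesis bounds $T_1$, the $k!$ bound handles $T_2$, and the halving comes from the explicit binomial estimate
$\sum_{k=3p+1}^{n-3} \binom{n+1}{k}^{-1} \leq \binom{n+1}{4}^{-1} + (n-4-3p)\binom{n+1}{5}^{-1} \leq \frac{1}{2n(n+1)}$,
which is exactly where the hypotheses $p\geq 2$ and $n \geq 3(p+1)+1$ enter.

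Your alternative of recursing at the root would also need care you have not supplied. The children of the root carry prime counts $p_1,\dots,p_m$ summing to $p$ (or $p-1$), and some of them will have $p_i=1$, for which the claimed inductive bound $48\,(k-1)!/2$ is simply false: $U_{k,1}\geq s_k \sim k!/e^2$, which exceeds $24\,(k-1)!$ for large $k$. So the induction hypothesis cannot be applied uniformly to the children, and the $p_i\in\{0,1\}$ subtrees need separate bounds that you do not give. Moreover the multinomial interleaving factors you invoke (of the type $\binom{n-1}{n_1-1,\dots}$) do not occur in this count: in a strong interval tree the value sets of the subtrees are consecutive blocks determined by the root's quotient permutation, so the recursion is $\sum \prod_i U_{n_i,p_i}$ weighted only by the number of admissible quotient permutations at the root. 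Neither objection is necessarily fatal to a root-based recursion, but closing them amounts to essentially all of the work, so the proposal stands as an outline rather than a proof.
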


\begin{proof}
  We proceed by induction on the number $p$ of prime vertices. The
  hypothesis is the following:

\centerline{$({\mathcal H_p}): \forall n, U_{n,p} \leq 48
  \frac{(n-1)!}{2^p}$.}

The hypothesis $({\mathcal H_p})$ is trivially true for $n < 3p+1$,
since a tree containing $p$ prime vertices has at least $3p+1$ leaves.
We initiate the proof with $p=2$ assuming $n \geq 7$. A tree of size
$n$ with two prime vertices can always be decomposed, although not
uniquely, as a tree $T_1$ that contains one prime vertex, where one
leaf is chosen and expanded by a second tree $T_2$ with one prime
vertex. Hence $|T_1|+|T_2| = n+1$. Without loss of generality, one can
assume that the root of $T_2$ is its only prime vertex.  Recall that
the number of trees with one prime vertex with $k$ leaves is at most
$k!$, as such trees are in bijection with a subset of unsigned
permutations of size $k$.  Hence,
\begin{eqnarray*}
 U_{n,2} & \leq &   \sum_{k=4}^{n-3} k! k (n+1-k)! 
 \leq (n+1)! \sum_{k=4}^{n-3} \frac{k}{{{n+1} \choose k}}\\
& \leq &  \frac{(n+1)!}{{{n+1} \choose 4}} \sum_{k=4}^{n-3} k  
 \leq \frac{24 (n+1)!}{(n+1)n(n-1)(n-2)} \sum_{k=0}^{n-3} k\\
& \leq &  \frac{24 (n-1)!}{(n-1)(n-2)}\frac{(n-3)(n-2)}{2} 
 \leq 48 \frac{(n-1)!}{2^2} 
\end{eqnarray*}


Let us now suppose $({\mathcal H_p})$ true and prove $({\mathcal
  H_{p+1}})$. We proceed as before. Indeed, a tree with $p+1$ prime
vertices can be decomposed -- not necessarily uniquely -- as a tree
$T_1$ with $p$ prime vertices, one leaf of which is expanded by
another tree $T_2$ with one prime vertex. As explained before, we can
assume that $n \geq 3(p+1)+1$. Hence:
\begin{eqnarray*}
 U_{n,p+1} & \leq &  \sum_{k=3p+1}^{n-3} U_{k,p} k (n+1-k)!\\
& \leq &  \frac{48}{2^p} \sum_{k=3p+1}^{n-3} (k-1)!k(n+1-k)!\\
& \leq &  \frac{48 (n+1)!}{2^p} \sum_{k=3p+1}^{n-3} \frac{1}{{{n+1} \choose k}}\\
& \leq &  \frac{48 (n+1)!}{2^p} \left[ \frac{1}{{{n+1} \choose {n-3}}} + \sum_{k=3p+1}^{n-4} \frac{1}{{{n+1} \choose k}} \right]\\
& \leq & \frac{48 (n+1)!}{2^p} \left[ \frac{1}{{{n+1} \choose {4}}} + (n-4-3p) \frac{1}{{{n+1} \choose 5}} \right]\\
\end{eqnarray*}
A straightforward analysis by successive derivations on $n$ shows that $\left[\frac{1}{{{n+1} \choose {4}}} + (n-4-3\cdot 2) \frac{1}{{{n+1} \choose 5}}\right] -\frac{1}{2n(n+1)}\leq 0$ for all $n \geq 10$. Hence, since $p \geq 2$, we deduce that $\left[\frac{1}{{{n+1} \choose {4}}} + (n-4-3p) \frac{1}{{{n+1} \choose 5}}\right] \leq \frac{1}{2n(n+1)}$ for all $n \geq 3(p+1)+1$. This ensures that $U_{n,p+1} \leq \frac{48(n-1)!}{2^{p+1}}$ and concludes the proof.
\end{proof}

In the context of signed permutations, Lemma \ref{lem:avgu} immediately yields the following result:

\begin{lemma}
  \label{lem:avg}
  The number $T_{n,p}$ of signed permutations of size $n$ whose strong
  interval trees contain $p$ prime vertices with $p \geq 2$ is at most
  $2^n 48 \frac{(n-1)!}{2^p}$. 
\end{lemma}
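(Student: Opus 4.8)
The plan is to deduce this directly from Lemma~\ref{lem:avgu} by a counting argument relating signed permutations to their underlying unsigned permutations. The key observation is that the map sending a signed permutation of size~$n$ to the unsigned permutation obtained by dropping all signs is exactly $2^n$-to-one: an unsigned permutation $\pi$ of size~$n$ is the underlying permutation of a signed permutation $\sigma$ if and only if $\sigma$ is obtained from $\pi$ by independently choosing a sign $+$ or $-$ for each of the $n$ positions, which gives precisely $2^n$ preimages, a number that does not depend on $\pi$. (This is the same observation already invoked in Lemma~\ref{lem:albert-enumeration-generalized} and in the remark preceding Theorem~\ref{thm:formePermutation}.)

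Next I would argue that the number of prime vertices of the strong interval tree is invariant under this map, i.e.\ it depends only on the underlying unsigned permutation. Indeed, the common intervals of $\sigma$ are defined via the \emph{content} of intervals, that is, via absolute values only, so $\sigma$ and its underlying unsigned permutation $\pi$ have the same set of common intervals, hence the same set of strong intervals, hence the same (unsigned) strong interval tree \TsP up to the leaf signs; in particular the quotient permutation of every internal vertex is unchanged, so a vertex is prime for $\sigma$ exactly when it is prime for $\pi$. Consequently the fibers of the forgetful map partition the signed permutations whose tree has $p$ prime vertices into blocks of size $2^n$ indexed by the unsigned permutations counted by $U_{n,p}$, giving the exact identity $T_{n,p} = 2^n\, U_{n,p}$.

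Finally, plugging the bound of Lemma~\ref{lem:avgu} (valid since $p\geq 2$) into this identity yields
\[
T_{n,p} = 2^n\, U_{n,p} \leq 2^n\cdot 48\,\frac{(n-1)!}{2^p},
\]
which is the claimed inequality. There is essentially no hard step here: the only thing that needs care is spelling out why the number of prime vertices is a function of the underlying unsigned permutation alone, which is immediate from the fact that common intervals ignore signs; everything else is the trivial $2^n$-fold counting.
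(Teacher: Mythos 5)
Your argument is exactly the one the paper intends: the paper derives Lemma~\ref{lem:avg} from Lemma~\ref{lem:avgu} by precisely this observation (stated earlier in Section~3.2) that signs are irrelevant to the strong interval tree structure and each unsigned permutation has exactly $2^n$ signed liftings, so $T_{n,p}=2^n U_{n,p}$. Your write-up simply makes explicit what the paper dismisses as ``immediate,'' and it is correct.
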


\begin{theorem}\label{thm:avg}
  Computing a shortest perfect scenario for a random signed
  permutation can be done with average time complexity bounded by
  $\mathcal{O}(n\sqrt{n\log n})$.
\end{theorem}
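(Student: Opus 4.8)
The plan is to bound the average running time of \algo\ by splitting the expectation according to the number $p$ of prime vertices in the strong interval tree. By Proposition~\ref{thm:complex}, a signed permutation $\sigma$ of size $n$ whose tree has $p$ prime nodes is sorted in time $\mathcal{O}(2^p\, n\sqrt{n\log n})$, so the average time over all $2^n n!$ signed permutations is $\mathcal{O}(n\sqrt{n\log n})$ times the quantity $P_n = \frac{1}{T_n}\sum_{p\ge 0} 2^p T_{n,p}$. It therefore suffices to show $P_n \in \mathcal{O}(1)$, and in fact I expect to prove $P_n = \mathcal{O}(1)$ with the limiting value being a small constant governed by the $p=0$ and $p=1$ terms.

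First I would isolate the low-order terms. The $p=0$ contribution is $T_{n,0}/T_n$, which is at most $1$ (it is the probability of being a commuting permutation). The $p=1$ contribution is $2\,T_{n,1}/T_n$; since the number of signed permutations with exactly one prime vertex is at most $2^n n!$, this term is $\mathcal{O}(1)$ as well — in fact Theorem~\ref{thm:formePermutation} tells us it tends to $1$. The heart of the argument is the tail $\sum_{p\ge 2} 2^p T_{n,p}/T_n$. Here I invoke Lemma~\ref{lem:avg}, which gives $T_{n,p} \le 2^n \cdot 48\,(n-1)!/2^p$ for $p\ge 2$. Dividing by $T_n = 2^n n!$, each term $2^p T_{n,p}/T_n$ is bounded by $48\,(n-1)!/n! = 48/n$, uniformly in $p$.

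Next I sum over $p$. A tree with $p$ prime vertices needs at least $3p+1$ leaves, so the only values of $p$ contributing are $2 \le p \le \lfloor (n-1)/3\rfloor$, fewer than $n/3$ terms. Hence $\sum_{p\ge 2} 2^p T_{n,p}/T_n \le (n/3)\cdot (48/n) = 16$, a constant independent of $n$. Combining the three pieces, $P_n \le 1 + \mathcal{O}(1) + 16 = \mathcal{O}(1)$. Multiplying back by the per-instance factor $\mathcal{O}(n\sqrt{n\log n})$ from Proposition~\ref{thm:complex} yields the claimed average-time bound $\mathcal{O}(n\sqrt{n\log n})$.

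The only subtlety — and what I would be most careful about — is making sure the uniform bound $2^p T_{n,p}/T_n \le 48/n$ is genuinely uniform over the admissible range of $p$ and that the crude term count $n/3$ is correct (one should double-check the exact cut-off $p \le \lfloor(n-1)/3\rfloor$ and that Lemma~\ref{lem:avg} applies for every such $p$, i.e.\ that $n \ge 3p+1$ is exactly the regime where the lemma's bound is non-vacuous). Everything else is a routine assembly: the geometric-looking decay in $2^p$ is in fact cancelled exactly by the $1/2^p$ in Lemma~\ref{lem:avg}, so the tail does not even need the decay — a linear term count suffices. No further enumerative input beyond Lemma~\ref{lem:avg} (hence ultimately Theorem~\ref{thm:simple}) is required.
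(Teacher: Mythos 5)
Your proposal is correct and follows essentially the same route as the paper: bound $P_n$ by treating $p=0,1$ trivially and applying Lemma~\ref{lem:avg} to the tail $p\ge 2$, where the $2^p$ cost cancels the $1/2^p$ in the lemma, then multiply by the per-instance bound from Proposition~\ref{thm:complex}. Your extra observation that only $p\le\lfloor(n-1)/3\rfloor$ contributes merely sharpens the constant (the paper simply sums $48/n$ over $p=2,\dots,n$) and does not change the argument.
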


\begin{proof}
  First we bound $P_n$. For all $n$, by Lemma~\ref{lem:avg},
\begin{eqnarray*}
     P_n &=& \frac{\sum_p 2^p T_{n,p}}{T_n} \\
           &\leq& \frac{\left(T_{n,0} + 2T_{n,1} +  \sum_{p=2}^{n} 2^n48 (n-1)! \right)}{T_n} \\
           &\leq&  3 + \sum_{p=2}^{n}\frac{48}{n}\\
           &=&  3 + 48(1-\frac{1}{n})
\end{eqnarray*}
  Thus, $P_n \in \mathcal{O}(1)$.  The average time complexity of
  \algo\ for permutations of size $n$ is given by the following sum,
  for some constant $C$:
\[  \frac{C}{T_n} \sum_{p=0}^{n} T_{n,p} 2^p n \sqrt{n\log n}= C\, P_n \, n\sqrt{n\log n}.\]
The result follows since $P_n \in \mathcal{O}(1)$.

\end{proof}


\section{Properties of commuting scenarios} 
\label{sec:commuting}
We observed in the previous section that the typical shape of the
common interval tree associated to a random permutation is very
particular, and it is reasonable to ask if a signed permutation
selected uniformly at random adequately represents the expected shape
of an evolutionary scenario. Experimentally, the strong interval trees
that arise when comparing pairs of mammalian genomes contain few prime
nodes, labeled with small, simple permutations. Rather, they contain
large subtrees with no prime nodes. These subtrees represent commuting
scenarios. At present we are unaware of a weighting operator on signed
permutations which correlates to the probability that such a
permutation could represent an evolutionary scenario on real
data. Indeed, such an operator would greatly aid in determining
realistic run-times for algorithms on biological data and other
properties of evolutionary scenarios. Towards this goal we begin by
investigating the class of strong interval trees with no prime
nodes. These correspond to commuting scenarios.

The trees that represent commuting scenarios are 
particularly well-studied. They fall into the category of \emph{simple
varieties of trees}, and as such, many formulas exist to compute
quantities such as the asymptotic number of trees with~$n$ leaves, and
also distributions associated to various tree parameters. Some of
these parameters have direct relevance to the evolutionary scenario
interpretation. Chapter~\cite[Section VII.3]{flajolet-analytic} is a
pedagogical reference for simple varieties of trees and we
outline how to derive some key values here. 

In the remainder of the section, we prove the following results on
parsimonious perfect scenarios sorting a commuting signed permutation
of size~$n$, via common interval trees:
\begin{enumerate}
\item The asymptotic number of commuting permutations is
  $2^{n+1} \cdot 0.12\, (5.88)^{n}\,n^{-3/2}$ (a very typical expression for trees) (Equation~\ref{eqn:numtrees});
\item The average number of reversals in one of these scenarios is
  $1.2\,n$ (Theorem~\ref{thm:numreversals}). This
  is a consequence of the average number of internal vertices in the tree
  (Equation~\ref{eqn:avg-iv});
\item The average length of a reversal is $1.05\, \sqrt{n}$. This is
  related to the average pathlength of a tree. (Theorem~\ref{thm:avglen})
\end{enumerate}

Additionally, in the proof of Theorem~\ref{thm:numreversals}, we
  can determine that 37\% of the expected reversals have length
  1. This agrees with the observation of a large proportion of short
  reversals in parsimonious scenarios for bacterial
  genomes~\cite{lefebvre}.

Finally, a note on convergence. The asymptotic estimates we present
converge quickly, even for relatively small $n$. For example, the
estimate given for the number of commuting permutations is correct up
to order $O(n^{-5/2})$. In real terms, at $n=100$ it is within 3\% of
the real value.  The parameters have a similar accuracy. The trees
that arise from biological data have on the order of 1000 leaves
(see~\cite{landau-gene} for example), and hence these are very strong
estimates.

\subsection{Modified Schr\"oder Trees}
Let $\sigma$ be a commuting permutation of size $n$, equivalently, a
signed permutation whose strong interval tree \TsP has no prime
node. Thus, $\TsP$ is a plane tree with the property that the internal
vertices have at least two children, each leaf is signed either~$+$
or~$-$, and the root is also signed~$+$ or~$-$ (to indicate whether it is an 
increasing or a decreasing linear vertex). The signs of the other
internal vertices follow unambiguously from the sign of the root, alternating 
between $+$ and $-$ along each branch of the tree.

Disregarding the signs on the leaves and root, this family of trees is
known as \emph{Schr\"oder trees} (entry A001003 in the On-Line
Encyclopedia of Integer Sequences \cite{njas}), and they are
straightforward to analyze. 

Let $\mathcal{C}$ be the class of
all strong interval trees representing commuting permutations, and let
$\mathcal{S}$ be the class of Schr\"oder trees. If $C_n$ and
$S_n$ respectively denote the number of trees with $n$ leaves in these
two classes, then
\[
C_n= 2\cdot 2^n\cdot S_n.
\]
Because of this exact~$\{1:2^{n+1}\}$ correspondence, we generally 
first consider the class~$\mathcal{S}$ to determine structural properties,
and then account for the contribution from the leaves. We remark that
$\mathcal{S}$ is a subset of the trees $\mathcal{T}$.

\subsection{A specification for $\mathcal{S}$}
Like many tree classes, there is a simple recursive description for
the class $\mathcal{S}$ of Schr\"oder trees: A tree is either
a leaf (denoted $\mathcal{L}$), or an internal vertex with at least
two subtrees, all of which are elements of $\mathcal{S}$. A visual representation 
of this statement is given on Figure \ref{fig:schroder}.

\begin{figure}[ht]
\center
\begin{tikzpicture}
\fill (0.5,0.5) circle (1.6pt) node [below=1.5pt] {$\mathcal{S}$};
\draw [semithick] (0,0) -- (1,0) -- (0.5,0.5) --(0,0);
\end{tikzpicture} =
\begin{tikzpicture}
\draw [semithick] (0,0) -- (.5,0) -- (.5,.5) --(0,.5) -- (0,0);
\draw (0.25,.25) node {$\mathcal{L}$};
\end{tikzpicture}  +
\begin{tikzpicture}
\fill (2,1) circle (1.6pt) node {};

\fill (0.5,0.5) circle (1.6pt) node [below=1.5pt] {$\mathcal{S}$};
\draw [semithick] (0.5,0.5) --(0,0)  -- (1,0) -- (0.5,0.5) --(2,1);

\fill (2,0.5) circle (1.6pt) node [below=1.5pt] {$\mathcal{S}$};
\draw [semithick] (2,0.5) --(1.5,0) -- (2.5,0) -- (2,0.5) --(2,1);

\fill (3.7,0.5) circle (1.6pt) node [below=1.5pt] {$\mathcal{S}$};
\draw [semithick] (3.7,0.5) --(3.2,0) -- (4.2,0) -- (3.7,0.5) --(2,1);

\draw (2.85,0.3) node {$\dots$};
\end{tikzpicture}
\caption{A Schr\"oder tree is decomposed as either a leaf, or an internal vertex
  with a sequence of subtrees. }
\label{fig:schroder}
\end{figure}
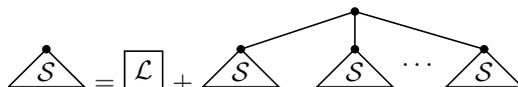
We say that the size of a tree $\tau\in\mathcal{S}$ is the number of
leaves, and we denote this quantity by $|\tau|$. We shall later
consider the number of internal vertices. A leaf is 
an atomic structure of weight one, and an
internal vertex is a neutral structure of weight 0.  We translate the
above picture description of $\mathcal{S}$ into the following
combinatorial equation:
\begin{equation}\label{eqn:schroder}
\mathcal{S} = \mathcal{L} + \operatorname{Seq}_{\geq 2}(\mathcal{S}),
\end{equation}
where $\operatorname{Seq}_{\geq 2}(\mathcal{S})$ represents a
\emph{sequence} (total order) of at least two trees of $\mathcal{S}$.

\subsection{A crash course on decomposable structures}
The formalism we are using here is well described
in~\cite{flajolet-analytic}. The main advantage is the direct access to
functional equations for the ordinary generating functions. Recall, if
$S_n$ is the number trees in $\mathcal{S}$ with $n$ leaves, the
ordinary generating function (ogf) is defined
as the formal power series~$S(z)=\sum_n S_n z^n$.  Thus, the series
expansion of~$S(z)$ 
begins $S(z)=z+z^2+3z^3+11z^4+\cdots$.  Many combinatorial
actions described on combinatorial classes have companion actions on
their generating functions. To summarize, suppose that $\mathcal{A}$
is a combinatorial class with some notion of size, and that $A_n$ is
the number of objects of size $n$. Let $A(z)$ be the series $\sum_n
A_n z^n$. If we can express $\mathcal{A}$ in terms of other
combinatorial classes, then we can do likewise for its ogf.  For example, if $\mathcal{A}$ is the disjoint union of two
classes: $\mathcal{A}=\mathcal{B}\uplus\mathcal{C}$, the associated
generating functions satisfy the simple relation $A(z)=B(z)+C(z)$. If
$\mathcal{A}$ is described using the cartesian product and the size is
additive, $\mathcal{A}=\mathcal{B}\times\mathcal{C}=\{ (\beta,
\gamma):\beta\in\mathcal{B},\gamma\in \mathcal{C} \}$, then the
ogf satisfy $A(z)=B(z)C(z)$, the usual
product for formal power series. Finally, if class $\mathcal{A}$ is a
sequence of objects from class $\mathcal{B}$, that is, 
$\mathcal{A}=\operatorname{Seq}(\mathcal{B})=\{(\beta_1, \dots,
\beta_k): 0\leq k, \beta_i\in \mathcal{B} \}$, then there is the
generating function correspondence
\[
A(z)=\frac{1}{1-B(z)}.
\]
This is the mere surface of a vast theory rooted in the foundational
work of Chomsky and Schutzenberger and their study of algebraic
equations related to context free grammars, but significantly advanced
and summarized as the theory of decomposable structures
in~\cite{flajolet-analytic}.

This is a particularly robust formalism: we can create recursive
functional equations, and we can easily pass information about
additional parameters. We do both of these here.

\subsection{Enumeration formulas}
We easily translate the combinatorial description in
Eq.~\eqref{eqn:schroder} into the functional equation\footnote{Here we have used that $\operatorname{Seq}_{\geq
  2}(\mathcal{S})=\mathcal{S}\times\mathcal{S}\times
\operatorname{Seq}(\mathcal{S})$.}
\begin{equation}\label{eqn:schroderEq}
S(z) = z+ \frac{S(z)^2}{1-S(z)}.
\end{equation}
This converts to a simple quadratic equation in~$S(z)$. There are two
solutions and we choose the one with a Taylor series expansion at 0
with positive integer coefficients, i.e. a generating function solution. This is 
\begin{equation}\label{eqn:schroderOgf}
S(z)=\frac{z+1-\sqrt{z^2-6z+1}}{4}=\frac{z+1}{4}-\frac14\sqrt{\left(1-\frac{z}{3+\sqrt{8}}\right)\left(1-\frac{z}{3-\sqrt{8}}\right)}.
\end{equation}

In order to determine expressions for the asymptotic growth, we follow
exactly the procedure outlined in~\cite[Chapter
  VI.1]{flajolet-analytic}, in particular the flow chart
of~\cite[Figure~VI.7]{flajolet-analytic}. We outline the three main
steps of the analysis, but readers interested in further details are
referred to this resource.

The first step is to determine the dominant singularity. This is the
smallest positive real-valued singularity, which in this case is
$3-\sqrt{8}$. 

The second step is to determine the behavior of the function around
its dominant singularity, $3-\sqrt{8}$:
\begin{equation}\label{eqn:nrsing}
S(z)\sim \frac{2-\sqrt{2}}{2}-\frac12\sqrt{\sqrt{18}-4}\left(
  1-\frac{z}{3-\sqrt{8}}\right)^{\frac12} \quad\mbox{as}\quad\quad z\sim 3-\sqrt{8}.
\end{equation}

We are in a context where asymptotic transfer
theorems~\cite[VI.3]{flajolet-analytic} apply, and hence we move to
the final step. The approximation of the function near this
singularity yields an asymptotic approximation for its coefficients in
the Taylor expansion around 0. Roughly, we adapt the following
correspondence
\[
F(z)\sim\left(1-\frac{z}{\rho}\right)^{-\alpha} \quad\mbox{as}\quad z\sim \rho \implies [z^n]F(z)\sim \rho^{-n}\frac{n^{\alpha-1}}{\Gamma(\alpha)}.
\]
In this notation, $[z^n]$ extracts the coefficient of $z^n$ in the
series expansion of the expression that immediately follows, and
$\Gamma$ is the Gamma function.  Using the
approximation of $S(z)$ near its dominant singularity from
Eq.~\eqref{eqn:nrsing}, we deduce
\begin{equation}\label{eqn:numtrees}
S_n = [z^n]S(z)\sim \left(\frac{1}{4}\sqrt{\sqrt{18}-4}\right) (3-\sqrt{8})^{-n}
\frac{n^{-3/2}}{\sqrt{\pi }}\sim 0.12\,(5.88)^n\,n^{-\frac32}.
\end{equation}

An asymptotic approximation of the number $C_n$ of signed commuting permutations of size $n$ is obtained by multiplying the above equivalent by $2^{n+1}$.

\subsection{Tree parameters: A primer}
We study the average value of different tree parameters with a common
strategy, which we briefly outline here. Let
$\chi:\mathcal{S}\rightarrow \mathbb{N}$ be an non-negative
integer valued function that records some combinatorial property of a
Schr\"oder tree, such as the number of internal vertices.  The main
tool here is the bivariate generating function
\[
S(z,u)=\sum_{\tau\in\mathcal{S}} u^{\chi(\tau)}z^{|\tau|} = \sum_{k,n}
S_{k,n} u^kz^n,
\]
where $S_{k,n}$ is the number of Schr\"oder trees with $n$
leaves, with $\chi$ value equal to $k$. 
Of course, $S(z) = S(z,1)$ and $S_n = \sum_{k\geq0} S_{k,n}$.
Now, if $\mathbb{E}_n(\chi)$
is the expected value of $\chi$ over all objects of size $n$ in
$\mathcal{S}$, then by definition
\[
\mathbb{E}_n(\chi)=\frac{\sum_{k\geq0} kS_{k,n}}{\sum_{k\geq0} S_{k,n}}.
\]
We have access to this from the bivariate generating function. Remark,
\[
\frac{\partial }{\partial u} S(z,u)= \sum_{k,n} k S_{k,n} u^{k-1}z^n.
\]
hence 
\[
\mathbb{E}_n(\chi)=\frac{\left.[z^n]\frac{\partial}{\partial u} S(z,u)\right|_{u=1}}{[z^n]S(z,1)}.
\]
The denominator of this expression is calculated in
Equation~\ref{eqn:numtrees}, and in our examples the numerator is a
coefficient extraction of an algebraic function of $z$, hence the
three steps described in the previous section apply. Indeed, in our
two examples, the dominant singularity is the same as in $S(z)$, $3-\sqrt{8}$.

This is also a robust approach, and upon considering higher
derivatives we can obtain higher moments.

\subsection{The average number of internal vertices}
We begin by considering the parameter $\chi$ equal to the number of
internal vertices in a Schr\"oder tree.  We can augment the
specification in Eq.~\eqref{eqn:schroder} with a neutral
marker~$\mu$ of weight 0 to tag the internal vertices:
\begin{equation}\label{eqn:iv-schroder}
\mathcal{S} = \mathcal{L} + \mu\times\operatorname{Seq}_{\geq 2}(\mathcal{S}).
\end{equation}
We now consider the bivariate generating function $S(z,u)$ where $u$
marks~$\chi$, which counts the total number of
markers. Eq.~\eqref{eqn:iv-schroder} translates to a functional
equation for $S(z,u)$:
\[
S(z,u) = z + u \frac{S(z,u)^2}{1-S(z,u)}.
\]
This is solved in a similar manner to $S(z)$:
\[
S(z,u) =\frac{z+1-\sqrt{(z+1)^2-4z(u+1)}}{2(u+1)}=z+uz^2+(u+2u^2)z^3+\dots.
\]
This expression can be differentiated to determine
$$\frac{\partial}{\partial u}S(z,u)|_{u=1} = \frac{(z-1)^2 - (z+1)\sqrt{(z+1)^2 -8z}}{8 \sqrt{(1-\frac{z}{3+\sqrt{8}})(1-\frac{z}{3-\sqrt{8}})}}\textrm{.}$$
As we remarked above, the singularity analysis follows as before using
the singularity $3-\sqrt{8}$. Thus, 
\[
\frac{\partial}{\partial u}S(z,u)|_{u=1} \sim \alpha \cdot
\left(1-\frac{z}{3-\sqrt{8}}\right)^{-1/2}\quad\mbox{as}\quad\quad z\sim
3-\sqrt{8},  
\]
where the constant $\alpha$ is determinnd by evaluating the rest of the expression
at $z=3-\sqrt{8}$. The third step, applying the transfer theorem, is
the performed. To determine
the average, we divide this expression by the asymptotic number of
trees, as determined in Equation~\eqref{eqn:numtrees}. 
We simplify the radicals, and obtain
\begin{equation}\label{eqn:avg-iv}
\mathbb{E}_n(\chi)=\frac{\left.[z^n]\frac{\partial}{\partial u}
    S(z,u)\right|_{u=1}}{[z^n]S(z,1)}\sim
\frac{3-\sqrt{8}}{3\sqrt{2}-4}\, n\sim \frac{n}{\sqrt{2}}.
\end{equation} 

\subsection{The average number of reversals.}
Next we use the average number of internal vertices to count the
average number of reversals in a scenario. An evolutionary scenario is
obtained from tree in $\mathcal{S}$ by signing the root, and the
leaves. Each internal vertex, except the root, represents a
reversal. A leaf represents a reversal if and only if it has a sign different from
the sign of its parent.

The number of internal vertices is a good first approximation for
the number of reversals. Asymptotically, since the average number of
vertices is a linear function of $n$, subtracting by one to account
for the root has little or no effect. 

In order to account for the reversals at the leaves, we remark that
for any tree in~$\mathcal{S}$ of size~$n$, we consider all $2^n$ possible ways
of assigning signs to the leaves, and from this symmetry we deduce
that on average this adds $n/2$ reversals, all of length~$1$.

We put all of these pieces together in the following theorem.
\begin{theorem}\label{thm:numreversals}
  The asymptotic average number of reversals in a parsimonious perfect
  scenario of a random signed commuting permutation of size $n$ is
  $n/\sqrt{2}+n/2=\frac{1+\sqrt2}2\,n$. On average there are $n/2$
  reversals of length $1$.
\end{theorem}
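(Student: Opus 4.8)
The plan is to assemble the statement from three ingredients that are already essentially in hand. First, by the bijection of Theorem~\ref{thm:bijection} restricted to the prime-free case, a random signed commuting permutation of size $n$ corresponds to a uniformly random tree in $\mathcal{C}$, and via the $\{1:2^{n+1}\}$ correspondence $C_n = 2\cdot 2^n\cdot S_n$ the underlying Schr\"oder tree is uniform in $\mathcal{S}$ while the $n$ leaf signs and the root sign are independent uniform bits. Second, by Proposition~\ref{thm:complex}(3), a parsimonious perfect scenario for a commuting permutation consists of exactly one reversal per vertex of $\TsP$ (internal vertex or leaf) whose sign differs from that of its parent; the root has no parent, so the count of reversals is $(\text{\# internal vertices}-1)+(\text{\# sign-mismatched leaves})$, where the $-1$ removes the root from the internal-vertex tally. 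Third, I would invoke Equation~\eqref{eqn:avg-iv}, which gives $\mathbb{E}_n(\text{\# internal vertices})\sim n/\sqrt2$.

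The argument then proceeds as follows. I would write the number of reversals as $R_n = (V_n - 1) + L_n$, where $V_n$ is the number of internal vertices and $L_n$ is the number of leaves whose sign differs from their parent's sign, and take expectations over the uniform distribution on $\mathcal{C}$. By linearity, $\mathbb{E}(R_n) = \mathbb{E}(V_n) - 1 + \mathbb{E}(L_n)$. For the first term I quote Equation~\eqref{eqn:avg-iv}: $\mathbb{E}(V_n)\sim n/\sqrt2$, so the $-1$ is absorbed into the $o(n)$ error. For $\mathbb{E}(L_n)$: condition on the underlying shape $\tau\in\mathcal{S}$ (which determines the parent of each leaf and, once the root sign is fixed, the signs of all internal vertices by the alternation rule). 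Each of the $n$ leaves then independently gets a uniform sign, so the event ``leaf $\ell$ differs from its parent'' has probability exactly $1/2$ regardless of $\tau$ and of the root sign; hence $\mathbb{E}(L_n \mid \tau) = n/2$ for every $\tau$, and so $\mathbb{E}(L_n) = n/2$ unconditionally. Combining, $\mathbb{E}(R_n)\sim n/\sqrt2 + n/2 = \frac{1+\sqrt2}{2}n$. The second sentence of the theorem is immediate from the same conditioning: all $L_n$ leaf-reversals have length $1$ (a reversal on a singleton interval), and $\mathbb{E}(L_n)=n/2$, so on average there are exactly $n/2$ reversals of length $1$. (If desired, one also notes that the $37\%$ figure mentioned in the introduction comes from comparing this $n/2$ against the total $\frac{1+\sqrt2}{2}n$, giving $\frac{1/2}{(1+\sqrt2)/2}=\frac{1}{1+\sqrt2}=\sqrt2-1\approx 0.414$ among internal-vertex-induced reversals — one should double-check which normalization yields $37\%$.)

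The one genuinely delicate point is the independence/symmetry claim used for $\mathbb{E}(L_n)$: it requires that the $2^{n+1}$ signings attached to each fixed Schr\"oder shape are taken with equal weight and that the leaf signs are jointly independent of one another and of everything else. This is exactly what the relation $C_n = 2\cdot 2^n\cdot S_n$ encodes — each shape is decorated by an independent root-sign bit and $n$ independent leaf-sign bits, with the internal signs then forced — so the conditional probability $1/2$ per leaf is clean. Everything else is bookkeeping: linearity of expectation, the harmless $-1$ for the root, and quoting Equation~\eqref{eqn:avg-iv}. I expect no analytic obstacle here, since the hard singularity analysis was already carried out to produce \eqref{eqn:avg-iv}; the theorem is essentially a repackaging of that computation plus the combinatorial description of perfect scenarios for commuting permutations.
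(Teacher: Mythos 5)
Your argument is correct and is essentially the paper's own proof: both decompose the reversal count as (number of internal vertices $-$ 1) plus the number of leaves whose sign mismatches their parent's, quote Equation~\eqref{eqn:avg-iv} for the $n/\sqrt2$ term, and use the symmetry of the $2^n$ uniform, independent leaf signings to get the $n/2$ term (your conditioning on the shape just makes this symmetry argument slightly more explicit than the paper's). Your parenthetical about the $37\%$ figure concerns only a remark in the introduction, not the theorem itself, so it does not affect the proof.
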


\subsection{The average length of a reversal}
The length of a reversal in a scenario is equal to the size (number of
leaves) in the corresponding subtree of the common interval tree. 
Again, our analysis first estimates by studying a parameter on unsigned trees,
$\mathcal{S}$, and then tunes by considering the reversals of size one
represented by signs on the leaves. 

Our analysis is guided by the study of a related parameter called
pathlength that frequently makes a cameo appearance when trees are
used to study sorting algorithms.

Let $\Psi(\tau)$ be the sum of the subtree sizes for all subtrees in
$\tau\in\mathcal{S}$. Examining Figure~\ref{fig:schroder}, we can
formulate a recursive description of~$\Psi(\tau)$. Consider
$\tau\in\mathcal{S}$. Either $\tau$ is a single leaf, in which
case~$\Psi(\tau)=0$, or the root has $m$ children, labeled left to
right by $\tau_1, \dots, \tau_m$. To compute $\Psi(\tau)$, we sum the
sizes of the subtrees of each child of the root, and then add the size
of the entire tree, which of course is the sum of the sizes of the
children.  This is written
\[
\Psi(\tau)=\sum_{j=1}^m \big( \Psi(\tau_j) + |\tau_j| \big).
\]

A tree parameter that satisfies such a relationship is an
\emph{additive parameter} and writing the corresponding functional
equation is straightforward. We mark the parameter~$\Psi$ by the variable~$v$ in the bivariate
generating function $S(z,v)$:
\begin{equation}\label{eqn:sumtrees}
S(z,v)=\sum_{\tau\in\mathcal{S}} v^{\Psi(\tau)}z^{|\tau|} = z + v^2z^2 + (v^3+2v^5)z^3+\dots
\end{equation}
This parameter is identical to the pathlength parameter, and the steps
from the generating function to the equation are all well-explained
in~\cite[Section III.5]{flajolet-analytic}. We derive the functional
equation
\begin{equation}\label{eqn:sumtreeseqn}
S(z,v)=z+\frac{S(vz,v)^2}{1-S(vz, v)}.
\end{equation}

Rather than solve for $S(z,v)$, it is easier to solve for
$\frac{\partial}{\partial v} S(z,v)|_{v=1}$ directly by
differentiating Eq.~\eqref{eqn:sumtreeseqn} with respect to~$z$
and~$v$, and setting $v=1$ in the resulting equations. This leads to
two equations in two unknowns. Using the notation
$S_v(z)=\frac{\partial}{\partial v} S(z,v)|_{v=1}$,
$S_z(z)=\frac{\partial}{\partial z} S(z,v)|_{v=1}$, and recalling
$S(z,1)=S(z)$, the counting ordinary generating function for
$\mathcal{S}$, this leads to the system
\begin{align*}
S_v(z)&=\frac{S(z)(2-S(z))(S_z(z)z+S_v(z))}{(1-S(z))^2}\\
S_z(z)&=1+\frac{S(z)S_z(z)(2-S(z))}{(1-S(z))^2}.
\end{align*}
We solve $S_v(z)$ in terms of $S(z)$:
\[
S_v(z)=\frac{z S(z)(2-S(z))(1-S(z))^2}{(1-4S(z)+2S(z)^2)^2}=2z^2+13z^3+80z^4+\dots
\]
This is an explicit expression to which we apply singularity analysis
to determine an expression for the coefficient of $z^n$:
\[
[z^n]\frac{\partial}{\partial v} S(z,v)|_{v=1}=[z^n]S_v(z)\sim\frac{\sqrt{2}}{16}(3-\sqrt{8})^{-n}.
\]
This value approximates the sum of the sizes of all subtrees of all
trees in~$\mathcal{S}$.  The average value of $\Psi$ is the quotient
\begin{align*}
\mathbb{E}_n(\Psi)=\frac{[z^n]S_v(z)}{[z^n]S(z)} &\sim
\left(\frac{\sqrt{2}}{16}(3-\sqrt{8})^{-n}
\right)\left(\frac{1}{4}\sqrt{\sqrt{18}-4}
  (3-\sqrt{8})^{-n}\frac{1}{\sqrt{\pi n^3}}\right)^{-1}\\
&=\frac{\sqrt{\pi}}{4\sqrt{3-\sqrt{2}}}\,n^{\frac32} \sim 1.27\, n^{\frac32}
\end{align*}

  To get the expected sum of the lengths of the reversals of a
  parsimonious perfect scenario for a random commuting permutation, we
  consider adjustments that occur for each tree in $\mathcal{S}$, so
  we can add them directly to this value. For each tree we remove the
  size of the whole tree ($n$) since we do not count this as a
  reversal, and we also add the average contribution of the reversals
  of size 1 $(n/2)$. These two adjustments do not affect the
  asymptotic growth since $n^{\frac32}$ dominates $n$ for large $n$.

 To determine the the average length, we now divide by the average number of
 reversals, which we determined to be $\frac{1+\sqrt{2}}{2}n$ in Theorem~\ref{thm:numreversals} We summarize these results in the following theorem. 
\begin{theorem}\label{thm:avglen}
  The average length of a reversal in a parsimonious perfect
  scenario for a random signed commuting permutation of size~$n$ is
  asymptotically
\[\frac{\sqrt{\pi}}{2(1+\sqrt{2})\sqrt{3-\sqrt{2}}}\,\sqrt{n} \sim 1.054\, \sqrt{n}.\]

\end{theorem}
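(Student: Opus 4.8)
The plan is to realize the average reversal length as a ratio: the asymptotic expected total length of all reversals of a parsimonious perfect scenario for a uniformly random signed commuting permutation of size $n$, divided by the asymptotic expected number of such reversals, which Theorem~\ref{thm:numreversals} already supplies as $\frac{1+\sqrt{2}}{2}n$. Almost everything needed for the numerator has been set up in the preceding discussion, so the argument amounts to assembling those pieces. Throughout I use the $\{1:2^{n+1}\}$ correspondence between signed commuting permutations of size $n$ and pairs consisting of a Schr\"oder tree $\tau\in\mathcal{S}$ together with a sign for its root and for each of its $n$ leaves; under this correspondence a uniformly random commuting permutation becomes a uniformly random such pair, and by Proposition~\ref{thm:complex} the scenario (hence its multiset of reversal lengths) is determined by the tree and these signs.

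First I would express, for a fixed signed tree, the total length of its reversals. Since the signs alternate along every branch of a Schr\"oder tree, each internal vertex other than the root carries a sign opposite to that of its parent, so it yields one reversal of length equal to the number of leaves of its subtree; each leaf yields a length-$1$ reversal precisely when its sign differs from that of its parent. Writing $\Psi(\tau)$ for the sum of the sizes of all subtrees rooted at internal vertices (the pathlength-type parameter of Eq.~\eqref{eqn:sumtreeseqn}), and noting that the whole tree contributes the term $n$, the total reversal length of the signed tree equals $\big(\Psi(\tau)-n\big)$ plus its number of sign-mismatched leaves. Averaging over the $2^n$ ways of signing the leaves, the latter contributes exactly $n/2$ by symmetry; hence the expected total reversal length is $\mathbb{E}_n(\Psi)-n+n/2$.

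Next I would recall how $\mathbb{E}_n(\Psi)$ is obtained. As $\Psi$ is an additive (pathlength) parameter, marking the leaves of every subtree translates into the substitution $z\mapsto vz$ inside the recursively occurring copies of $\mathcal{S}$, giving $S(z,v)=z+\dfrac{S(vz,v)^2}{1-S(vz,v)}$. Differentiating this in $z$ and in $v$ and setting $v=1$ produces a linear system whose solution, using the explicit $S(z)$ of Eq.~\eqref{eqn:schroderOgf}, is an algebraic function $S_v(z)$ with denominator $(1-4S(z)+2S(z)^2)^2$. The point that drives the asymptotics is that $S(z)$ at its dominant singularity $\rho=3-\sqrt{8}$ equals $\frac{2-\sqrt{2}}{2}$, which is a root of $1-4X+2X^2$; together with $S(z)\sim S(\rho)-c\sqrt{1-z/\rho}$ this forces the denominator to vanish to first order, so $S_v(z)$ has a simple pole at $\rho$ and $[z^n]S_v(z)\sim\frac{\sqrt{2}}{16}\rho^{-n}$. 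Dividing by $[z^n]S(z)\sim\frac14\sqrt{\sqrt{18}-4}\,\rho^{-n}(\pi n^3)^{-1/2}$ from Eq.~\eqref{eqn:numtrees} yields $\mathbb{E}_n(\Psi)\sim 1.27\,n^{3/2}$.

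Finally I would combine: the expected total reversal length is $\mathbb{E}_n(\Psi)-n+n/2\sim\mathbb{E}_n(\Psi)$, since the $\Theta(n)$ corrections are negligible against $n^{3/2}$, and dividing by the expected number of reversals $\frac{1+\sqrt{2}}{2}n$ from Theorem~\ref{thm:numreversals} gives the stated asymptotic $\sim 1.054\,\sqrt{n}$. I expect the real obstacle to be the third step: writing down and solving the catalytic-variable functional equation for the pathlength parameter, and above all recognizing that $S_v(z)$ degenerates from a square-root branch point to a polar singularity at $\rho$ (this is exactly what makes $\mathbb{E}_n(\Psi)$ grow like $n^{3/2}$ rather than $n^{1/2}$, and hence makes the average reversal length $\Theta(\sqrt n)$). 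A minor point worth a sentence is that ``average length of a reversal'' is being interpreted as $\mathbb{E}[\text{total length}]/\mathbb{E}[\text{number of reversals}]$, equivalently the length averaged uniformly over all reversals of all scenarios, rather than the expectation of the per-scenario average.
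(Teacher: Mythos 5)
Your proposal matches the paper's proof essentially step for step: the same pathlength-type parameter $\Psi$ summing subtree sizes over internal vertices, the same catalytic functional equation $S(z,v)=z+S(vz,v)^2/(1-S(vz,v))$ differentiated at $v=1$, the same singularity analysis giving $\mathbb{E}_n(\Psi)\sim 1.27\,n^{3/2}$, the same $-n$ and $+n/2$ corrections (negligible against $n^{3/2}$), and the same final division by the expected number of reversals $\frac{1+\sqrt{2}}{2}n$ from Theorem~\ref{thm:numreversals}. Your explicit observation that $1-4S(z)+2S(z)^2$ vanishes at $z=3-\sqrt{8}$ because $S(3-\sqrt{8})=\frac{2-\sqrt{2}}{2}$ is a root, so that $S_v(z)$ acquires a simple pole there and $[z^n]S_v(z)$ carries no polynomial factor, is a correct and useful sharpening of the step the paper compresses into ``apply singularity analysis.''
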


\section{Conclusion}
\label{sec:conclusion}

\paragraph{Summary}
Perfect sorting by reversals, although an intractable problem, is very
likely to be solved in polynomial time for random signed permutations,
under the uniform distribution. This result relies on a study of the
shape of a random strong interval tree that shows that asymptotically
such trees are mostly composed of a large prime vertex at the root and
small subtrees.  We were also able to give precise asymptotic results
for the expected lengths of a parsimonious perfect scenario and of a
reversal of such a scenario for random commuting permutations. Our
results were obtained using techniques of enumerative and analytic
combinatorics.

\paragraph{Discussion on our results.}
Recently, several works have investigated average properties of
combinatorial objects related to genomic distance computation, such as
the breakpoint graph~\cite{SLRM08,X08,XZS07}, conserved
segments~\cite{tesler-distribution} or adjacencies and common
intervals~\cite{corteel-common,XBS08}. The motivation for such works
can be twofold. One can be interested in the expected behavior of
some algorithms, such as in~\cite{SLRM08}, that shows that the most
intricate part of the theory of sorting by reversals (clearing hurdles
and fortress) is not required on uniform random permutations. Our
results on the average complexity of computing a parsimonious perfect
scenario belong to this family of results. In other cases, one can be
interested in the expected properties of an evolutionary scenario for
random genomes~\cite{X08,XZS07}. This allows, given real data,
to assess the significance of the comparison of a pair of genomes and
to compute statistical tests measuring the evolutionary signal left:
intuitively, if a scenario between two real genomes looks like a
scenario between random genomes, one can make the hypothesis that
there is little to no evolutionary signal left in the considered pair
of real genomes. Our results on commuting permutations are of such
nature.

The fact that computing a parsimonious perfect scenario requires
polynomial time on the average is mostly a theoretical result, that
completes the complexity analysis of the problem. Indeed, real data
sets (pairs of genomes or genome segments) are in general not expected
to define strong interval trees with a large number of prime nodes
(see~\cite{landau-gene} for example). So the algorithms described
in~\cite{berard-perfect,berard-more} were already known to be
efficient on real data sets.  

It should however be noted that our results on the expected shape of a
strong interval tree, and in particular on the number of prime
vertices, generalizes previous results on conserved adjacencies and
common intervals in
permutations~\cite{uno-fast,corteel-common,XBS08}. They could form the
basis for a deeper study of the expected shape of the strong interval
tree, parametrized by the number of common adjacencies or of prime
nodes. Also, the combinatorial specification of the class of strong
interval trees opens the way to random generation
algorithms~\cite{flajolet-random-generation} of trees with some prescribed
structure (such as the number or maximum degree of prime nodes), that
we outline in the paragraph below on future research. This might allow
to study by simulations the expected properties of a perfect scenario
between pairs of genomes defining strong interval trees with a
prescribed structure. Such way to assess the significance of features
of a hypothetical scenario between real genomes is clearly of
practical interest.

In the same vein, the strong interval trees obtained in comparing
pairs of mammalian genomes for example~\cite{landau-gene} contain very
few prime nodes, and then contain large subtrees that represent
commuting permutations; these subtrees can then be compared to the
expected properties of random commuting permutations to point at
genome segments whose evolution is significantly non-random.

\paragraph{Future research.}
There exists several more general models of genome rearrangements
\cite{fertin09}. Among them, the more general is based on an operation
called \emph{Double-Cut-and-Join} (DCJ for short) that models
reversals and several other types of rearrangements. The notion of
perfect DCJ scenarios has been studied in~\cite{berard-perfect2} and
has the intriguing property that instances that were hard to solve for
reversals can be solved in polynomial time in the DCJ model and
conversely. It would then be interesting to compare the average time
complexity of perfect sorting by DCJ to the results we describe in the
present work.

We could, modulo the labeling of the prime nodes by simple
permutations, easily describe ${\cal T}_n$ using a grammar in the
combinatorial calculus described in~\cite{flajolet-analytic}. This
would give access to enumerative and structural information, when
paired with the generating function for simple
permutations. Generating these trees directly, i.e. without first
generating the corresponding permutation, remains an interesting open
problem, that seems to be well suited to Boltzmann random generation
techniques~\cite{duchon-boltzmann}.

Also, PQ-trees are a natural family of trees that are both related to
common intervals of permutations~\cite{bergeron-computing} and used in
comparative genomics~\cite{landau-gene}. Investigating average
properties of PQ-trees is a natural extension of the work presented
here. 

More generally, average properties of the many families of
combinatorial objects that appear in comparative genomics models and
algorithms is an almost completely open field, that contains many
challenging problems and deserve being investigated.


\section{Acknowledgments} Authors MM and CC acknowledge funding
support from the Discovery Grant program of the Natural Science and
Engineering Research Council (Canada). MB and DR are supported by the
ANR projects GAMMA (BLAN07-2\_195422) and MAGNUM
(2010\_BLAN\_0204). 


\end{document}